\documentclass[12pt]{article}
\usepackage[letterpaper,margin=1in]{geometry}
\usepackage{amsmath,amssymb,amsthm,mathtools}
\usepackage{microtype}
\usepackage{titlesec}
\usepackage[hidelinks]{hyperref}
\titleformat{\section}{\bfseries\large}{\thesection.}{0.45em}{}
\titleformat{\subsection}{\bfseries\normalsize}{\thesubsection.}{0.45em}{}
\titlespacing*{\section}{0pt}{8pt}{3pt}
\titlespacing*{\subsection}{0pt}{6pt}{2pt}

\theoremstyle{plain}
\newtheorem{theorem}{Theorem}
\newtheorem{corollary}{Corollary}
\newtheorem{lemma}{Lemma}
\newtheorem{assumption}{Assumption}
\theoremstyle{remark}
\newtheorem{remark}{Remark}

\newcommand{\uL}{\underline\Lambda}
\newcommand{\un}{\underline\nu}
\newcommand{\PP}{\mathcal P}

\newcommand{\TT}{\mathcal T}
\newcommand{\Hs}{H}

\title{\bfseries Stabilization of a Heterodirectional\\ Bradytachic (1D, 2D) PDE Pair}
\author{Miroslav Krstic\thanks{Department of Mechanical and Aerospace Engineering, University of California, San Diego; mkrstic@ucsd.edu.}}
\date{}

\begin{document}

\maketitle

\begin{abstract}
A scalar hyperbolic PDE, actuated at one boundary, is coupled with a fast PDE in two spatial dimensions: transport in the axial coordinate and in an internal coordinate, and possibly diffusion in the internal coordinate. Backstepping designs exist for coupled hyperbolic systems, for their ensembles, and for their continua, but in all of these the second variable of the fast subsystem carries no transport or diffusion; when it does, no design is available, and from a scalar input no exact control of the two-dimensional state is to be expected. The pair is stabilized here by time-scale separation: in the quasi-steady limit the two-dimensional subsystem collapses into a spatial Volterra operator inside a one-dimensional reduced plant, backstepping applies there, and the two-dimensional subsystem is left to be a stable boundary layer. One theorem establishes exponential stability of the pair for every time-scale ratio below an explicit threshold, uniformly in the internal diffusion coefficient down to the pure-transport case.
\end{abstract}

\section{Introduction}

\subsection{Open problem: from ensembles to 2D fast PDEs}
Backstepping stabilization of a pair of counterconvecting transport PDEs was introduced in \cite{VazquezKrsticCoron2011,CoronVazquezKrsticBastin2013}, extended to $n+1$ systems actuated by a single boundary input \cite{DiMeglioVazquezKrstic2013} and to general $n+m$ systems \cite{HuDiMeglio2016}, and refined to achieve the minimum stabilization time of \cite{LiRao2010} in \cite{AuriolDiMeglio2016,CoronHuOlive2017}; delay-robustness is analyzed in \cite{AuriolAarsnes2018}; states sharing a transport speed (isotachic) are handled by the decoupling of \cite{HuDiMeglio2016}; the book \cite{BastinCoron2016} and the survey \cite{VazquezAuriol2026} cover the hyperbolic theory and PDE backstepping at large.

Recent work treats systems whose number of states is large or infinite. Alleaume and Krstic \cite{AlleaumeKrstic2025} stabilize a scalar hyperbolic PDE, actuated at one boundary, coupled with an ensemble of counterconvecting transport PDEs: a continuum of one-dimensional equations indexed by a parameter $z$, with bounded couplings across the ensemble. Humaloja and Bekiaris-Liberis develop backstepping for continua of hyperbolic PDEs as a tool for the large-scale finite case: the control law designed for the continuum stabilizes the large but finite $n+m$ system \cite{HumalojaBLTAC2025}, the design extends to continua of rightward transporting PDEs coupled with finitely many counterconvecting states, with kernel equations on a prismatic three-dimensional domain \cite{HumalojaBLAut2025}, the continuum kernels are computable with complexity independent of the number of components \cite{HumalojaBLSCL2025}, and output-feedback versions rest on continuum observers \cite{HumalojaBLAut2026of}. In all of these, there is no transport or diffusion in the ensemble variable: $z$ enters the kernels as a parameter, never as a direction of propagation. That is what keeps these systems within reach of a backstepping design with one, or finitely many, boundary inputs.

The present paper poses and solves the next problem. The plant is the pair
\begin{equation}
 u_t(x,t)
 =u_x(x,t)+\int_0^D c(x,z)v(x,z,t)\,dz,
 \qquad 0<x<1,
 \label{eq:gen-u}
\end{equation}
\begin{align}
 &\delta v_t(x,z,t)+\Lambda(x)v_x(x,z,t)+\nu(x,z)v_z(x,z,t)-\varepsilon v_{zz}(x,z,t)
 \nonumber\\
 &\qquad\qquad=d(x,z)v(x,z,t)+b(x,z)u(x,t)
 \nonumber\\
 &\qquad\qquad\hphantom{=}
 +\int_0^D\theta(x,z,\zeta)v(x,\zeta,t)\,d\zeta,
 \qquad 0<z<D,
 \label{eq:gen-v}
\end{align}
with only the scalar boundary input $u(1,t)=U(t)$ available. The terms $\nu(x,z)v_z$ and, when present, $\varepsilon v_{zz}$ are absent from the ensemble: they make $v$ a single two-dimensional PDE on the rectangle $(0,1)\times(0,D)$, transporting (and possibly diffusing) through the internal coordinate, rather than a continuum of one-dimensional PDEs that share an index. No backstepping design is available for a two-dimensional PDE actuated through the boundary trace of its one-dimensional neighbor, and from a scalar input no form of exact controllability of $v$ is to be expected.

\subsection{Solution idea}
The solution has two ingredients.

The first is time-scale separation: the parameter $\delta$ in \eqref{eq:gen-v} is small, so the two-dimensional subsystem is fast. Then $v$ does not need to be controlled at all. It needs only to be a stable boundary layer around its quasi-steady state, and the entire design can take place on the reduced model obtained at $\delta=0$.

The second is the conversion of the quasi-steady two-dimensional PDE into a Volterra operator inside a single one-dimensional PDE. At $\delta=0$ the fast equation loses its time derivative and becomes, for each frozen $t$, an evolution in the space variable $x$: with zero inflow at $x=0$ and at $z=0$, it is marched from the zero state, driven by the slow profile $u(\cdot,t)$. Eliminating the fast state converts the pair into the scalar plant
\begin{equation}
 u_t(x,t)=u_x(x,t)+\int_0^x f(x,\xi)\,u(\xi,t)\,d\xi ,
 \label{eq:introreduced}
\end{equation}
a hyperbolic PDE with a causal spatial Volterra operator, which backstepping stabilizes with the one available input \cite{KrsticSmyshlyaev2008}. The paper then proceeds as follows: backstepping on the reduced Volterra plant; the resulting boundary feedback applied, unchanged, to the true $\delta>0$ plant; and a composite Lyapunov functional, with closed-form optimized slow-fast weighting, certifying exponential stability of the full coupled system for every $\delta$ below an explicit threshold, uniformly in the internal diffusion coefficient $\varepsilon\in[0,\bar\varepsilon]$ down to the pure-transport case $\varepsilon=0$ (Theorem~\ref{thm:main}, with all constants explicit in the plant data).

The conversion depends on causality. A quasi-steady PDE generically produces a spatial boundary-value problem, whose solution at $x$ depends on the input on both sides of $x$: a Fredholm, not a Volterra, operator. That is the situation in the linear parabolic-elliptic PDE differential-algebraic couplings of Alalabi--Morris \cite{AlalabiMorris2024,AlalabiMorris2025}, and no backstepping design of the present kind is available there. What makes the operator here Volterra is the marching structure: the fast subsystem transports away from $x=0$ with zero inflow, and diffusion, when present, acts transversally to the marching direction. Two earlier constructions contain the same conversion one dimension down, with quasi-steady spatial ODEs in place of a quasi-steady spatial PDE: in the Korteweg--de Vries-like system of \cite{KrsticSmyshlyaevBook2008,KrsticSmyshlyaev2008}, the second boundary control zeroes the fast state and its derivative at $x=0$, turning the quasi-steady second-order ODE into a zero-state initial-value problem and the eliminated variable into a strict spatial Volterra term; in the thermal convection loop of \cite{VazquezKrstic2006}, boundary actuation likewise removes the noncausal component and the quasi-steady fast velocity becomes an explicit causal integral of the temperature. The step taken here is from spatial ODEs to a two-dimensional spatial PDE, with the boundary-layer stability analysis to match.

\paragraph{Contributions.}
\begin{itemize}
\item Heterodirectional PDE pairs whose fast subsystem is two-dimensional---transporting and possibly diffusing in a second spatial variable---have been outside the reach of boundary control design. This paper stabilizes them with the single scalar input available.
\item In the only previous combination of singular perturbation with PDE backstepping, the thermal convection loop \cite{VazquezKrstic2006}, the quasi-steady state solves an ordinary differential equation, in closed form. Here it is the solution operator of a two-dimensional PDE, and its substitution produces a one-dimensional plant of the Volterra class of \cite{KrsticSmyshlyaev2008}, with kernels that no ordinary-differential or ensemble fast subsystem generates.
\item The obstacle that a PDE-produced quasi-steady state raises, and that an ODE-produced one does not, is its motion in time, which injects the spatial derivative of the slow state into the fast subsystem, a term for which an $L^2$ analysis has no norm. The central estimate of the proof shows that this derivative, integrated against the quasi-steady kernel, amounts to no more than the slow state and its outflow trace---and the trace is absorbed by the boundary dissipation of the transport target, the term that Lyapunov analyses of hyperbolic systems ordinarily discard.
\item The design is, to our knowledge, the first stabilizing feedback for a partial differential-algebraic system (PDAE) whose algebraic part is itself a PDE, and certainly the first by PDE backstepping.
\end{itemize}
A single feedback law and a single theorem cover the fast subsystems from pure transport to ultraparabolic transport-diffusion.

\paragraph{Terminology.}
The title terminology extends the naming convention of \cite{HuDiMeglio2016}, where \emph{heterodirectional} was introduced for coupled hyperbolic systems whose transport speeds have opposite signs, and \emph{isotachic} for states whose transport speeds coincide. The present pair is heterodirectional---$u$ convects toward decreasing $x$, $v$ toward increasing $x$---and \emph{bradytachic}, from the Greek $\beta\rho\alpha\delta\acute\upsilon\varsigma$ (slow) and $\tau\alpha\chi\acute\upsilon\varsigma$ (fast): one subsystem of the pair is slow and the other fast. As in the clinical usage that makes the two prefixes familiar, they denote extremes rather than mild deviations, so the word carries not only that the transport speeds differ but that they are separated by a wide margin, quantified by the ratio $\delta$ of the two time scales. The separation is what makes the problem solvable. A heterodirectional (1D, 1D) pair needs no time-scale separation at all, since coupled-hyperbolic backstepping stabilizes it exactly for any speed ratio. For the fast subsystem in two dimensions, the speed separation reduces the demand from control to boundary-layer stability, and Theorem~\ref{thm:main} quantifies a width of separation that suffices.

\paragraph{Notation.}
$\Hs:=L^2(0,D)$ with norm $\|\cdot\|_\Hs$ and inner product $\langle\cdot,\cdot\rangle_\Hs$. For $u\in L^2(0,1)$, $\|u\|$ denotes the $L^2(0,1)$ norm. For $\Hs$-valued functions of $x$ we write $\|F\|_{L^2_x(\Hs)}^2=\int_0^1\|F(x)\|_\Hs^2\,dx$ and identify $L^2_x(\Hs)$ with $L^2((0,1)\times(0,D))$; the same symbol $\|\cdot\|$ is used when no confusion arises. Time arguments are suppressed where the operations act at frozen $t$.

\section{Problem Formulation}
The plant \eqref{eq:gen-u}, \eqref{eq:gen-v} is posed on $0<x<1$, $0<z<D$, with a small parameter $0<\delta\ll1$, an internal diffusion coefficient $\varepsilon\ge0$, boundary conditions
\begin{equation}
 u(1,t)=U(t),\qquad v(0,z,t)=0,
 \label{eq:gen-bc}
\end{equation}
and, in the internal coordinate,
\begin{equation}
 \nu(x,0)\,v(x,0,t)-\varepsilon v_z(x,0,t)=0,
 \qquad
 v_z(x,D,t)=0\quad(\varepsilon>0),
 \label{eq:gen-zbc}
\end{equation}
with initial conditions $u(x,0)=u_0(x)$, $v(x,z,0)=v_0(x,z)$. The flux in $z$ is $\nu v-\varepsilon v_z$, so the first condition in \eqref{eq:gen-zbc} is zero inflow flux at $z=0$; as $\varepsilon\to0$ it degenerates to the Dirichlet inflow condition $v(x,0,t)=0$, and the Neumann condition at $z=D$ is void at $\varepsilon=0$, where the operator is first order in $z$. The slow field $u$ travels toward decreasing $x$; the fast field $v$ travels toward increasing $x$ and increasing $z$. The unit coefficient in \eqref{eq:gen-u} is without loss of generality: any positive scalar speed $\lambda(x)$ is removed by a monotone reparametrization of $x$ and a constant rescaling of time. Taking the fast axial speed as $\Lambda(x)$ rather than an arbitrary $\Lambda(x,z)$ is a structural assumption, not a normalization; the speed $\nu(x,z)$ in the internal coordinate is left fully two-dimensional.

Physical readings of $z$. At $\varepsilon=0$, $z$ is an age or maturity coordinate of the fast species, injected from $u$ with intensity $b$, aggregated through $c$, growing or decaying through $d$, and coupled across ages through $\theta$. For $\varepsilon>0$ the fast equation is hyperbolic in $x$ and parabolic in $z$, an ultraparabolic equation with the two time-like directions $t$ and $x$, and two further readings open: $z$ a transverse channel coordinate, giving the convection-diffusion balance of a plug-flow reactor or counterflow heat exchanger with transverse diffusion, the Graetz configuration; and $z$ a physiological trait, with $-\nu v_z+\varepsilon v_{zz}$ the drift-diffusion part of the Fokker--Planck operator $-\partial_z(\nu v)+\varepsilon v_{zz}$ of deterministic maturation plus random trait drift, whose zero-order term $-\nu_zv$ is absorbed into $d(x,z)v$. The design depends on causality in $x$ surviving: diffusion transverse to the marching direction leaves the quasi-steady problem a well-posed evolution in $x$ from the zero state, so the reduced model is a causal spatial Volterra operator; diffusion along $x$ would instead produce a two-sided boundary-value problem and a Fredholm operator, the situation of \cite{AlalabiMorris2024,AlalabiMorris2025}.

\begin{assumption}\label{as:coeff}
$\Lambda\in C^1([0,1])$ and $\nu\in C^1([0,1]\times[0,D])$, with $\Lambda(x)\ge\uL>0$ and $\nu(x,z)\ge\un>0$; $b\in C^1([0,1]\times[0,D])$, with $b(x,\cdot)\in C^2([0,D])$ uniformly in $x$; and $c$, $d$, $\theta$ are continuous.
\end{assumption}

The substance of Assumption~\ref{as:coeff} is the regularity of the coefficients and the uniform positivity of the two transport speeds; everything else is notation. With $\rho:=\nu/\Lambda$, the remaining symbols are shorthands for norms of the data, finite by Assumption~\ref{as:coeff} on the compact rectangle:
\begin{gather}
 \Lambda_1:=\|\Lambda'\|_\infty,\qquad
 \nu_1:=\|\nu_z\|_\infty,\qquad
 \bar\Lambda:=\|\Lambda\|_\infty,\qquad
 \bar\nu:=\|\nu\|_\infty,
 \nonumber\\
 \bar\rho:=\|\rho\|_\infty,\qquad
 \rho_1:=\|\rho_z\|_\infty\le\frac{\nu_1}{\uL} ,
 \label{eq:speedconst}\\
 B_0:=\|b\|_\infty,\qquad
 B_x:=\|b_x\|_\infty,\qquad
 B_z:=\|b_z\|_\infty,\qquad
 B_{zz}:=\|b_{zz}\|_\infty,
 \label{eq:bconst}\\
 M_C:=\sup_{x}\|c(x,\cdot)\|_{L^2(0,D)},
 \qquad
 \bar d:=\|d\|_\infty,
 \qquad
 \bar\theta:=\sup_{x}\|\theta(x,\cdot,\cdot)\|_{L^2((0,D)^2)} .
 \label{eq:cdthetaconst}
\end{gather}
The diffusion coefficient satisfies $0\le\varepsilon\le\bar\varepsilon$ for a fixed $\bar\varepsilon\ge0$; all constants below depend on $\bar\varepsilon$ but not on $\varepsilon$.

The closed-loop system, once the feedback \eqref{eq:genU} below is inserted, is a linear system of transport ($\varepsilon=0$) or transport-diffusion ($\varepsilon>0$) equations with bounded interior couplings and a boundary condition given by a bounded functional of the state. Its well-posedness---a unique mild solution for every $\delta>0$---is part of Theorem~\ref{thm:main}. For smooth compatible initial data the solutions are piecewise classical, with weak discontinuities only along characteristics at $\varepsilon=0$; all Lyapunov computations below are performed on such solutions and extend by density. The control objective is exponential stability of the origin in the $L^2$ topology, for all sufficiently small $\delta$, with an explicit admissible range of $\delta$, uniform in $\varepsilon\in[0,\bar\varepsilon]$.

\section{Reduced Model}\label{sec:design}

Set $\delta=0$ in \eqref{eq:gen-v}, divide by $\Lambda(x)$, and regard $V(x):=v(x,\cdot)\in\Hs$ as evolving in $x$:
\begin{equation}
 V_x(x)=\mathcal A_\varepsilon(x)V(x)+B(x)u(x),\qquad V(0)=0,
 \label{eq:Hspatial}
\end{equation}
where
\begin{align}
 \mathcal A_\varepsilon(x)\phi
 &:=\frac{1}{\Lambda(x)}
 \Big[\varepsilon\phi''-\nu(x,\cdot)\phi'+d(x,\cdot)\phi
 +\int_0^D\theta(x,\cdot,\zeta)\phi(\zeta)\,d\zeta\Big],
 \label{eq:Aeps}\\
 \big(B(x)r\big)(z)&:=\frac{b(x,z)}{\Lambda(x)}\,r,
 \qquad
 C(x)\phi:=\int_0^D c(x,z)\phi(z)\,dz,
 \label{eq:BCgen}
\end{align}
on the domain $\{\phi\in H^2(0,D):\varepsilon\phi'(0)=\nu(x,0)\phi(0),\ \phi'(D)=0\}$ for $\varepsilon>0$, and the transport domain $\{\phi\in H^1(0,D):\phi(0)=0\}$ for $\varepsilon=0$. Split $\mathcal A_\varepsilon(x)=\mathcal A_{\varepsilon,0}(x)+\mathcal R(x)$, where $\mathcal A_{\varepsilon,0}(x)$ collects the derivative terms and
\begin{equation}
 \mathcal R(x)\phi
 :=\frac{d(x,\cdot)}{\Lambda(x)}\phi
 +\frac{1}{\Lambda(x)}\int_0^D\theta(x,\cdot,\zeta)\phi(\zeta)\,d\zeta,
 \qquad
 \|\mathcal R(x)\|_{\mathcal L(\Hs)}\le M_R:=\frac{\bar d+\bar\theta}{\uL}.
 \label{eq:Rdef}
\end{equation}

For $0\le\xi\le x\le1$ and $\phi\in\Hs$, define $\Phi_\varepsilon(x,\xi)\phi:=V(x)$, where $V$ solves the homogeneous part of \eqref{eq:Hspatial}, namely $V_x=\mathcal A_\varepsilon(x)V$ on $(\xi,1)$ with $V(\xi)=\phi$: a zero-inflow transport problem, solved along characteristics, for $\varepsilon=0$, and a parabolic evolution for $\varepsilon>0$. The zero-state solution of \eqref{eq:Hspatial} itself is then
\begin{equation}
 (\PP_\varepsilon u)(x)=\int_0^x\Phi_\varepsilon(x,\xi)B(\xi)u(\xi)\,d\xi ,
 \label{eq:Pgen}
\end{equation}
which satisfies the homogeneous conditions \eqref{eq:gen-zbc} and vanishes at $x=0$. We call $\PP_\varepsilon$ the quasi-steady-state linear operator: a linear operator mapping functions of the single spatial variable $x$ into functions of the two spatial variables $x$ and $z$. Uniformly in $\varepsilon$,
\begin{equation}
 \|\Phi_\varepsilon(x,\xi)\|_{\mathcal L(\Hs)}\le M_\Phi:=e^{\rho_1/2+M_R},
 \qquad
 \|\PP_\varepsilon\|_{\mathcal L(L^2(0,1),L^2_x(\Hs))}\le M_P:=\frac{M_\Phi\sqrt D\,B_0}{\uL} .
 \label{eq:Phibound}
\end{equation}
Well-posedness of both solution maps and the bounds \eqref{eq:Phibound} are routine; they are certified by Lemma~\ref{lem:evolution} in the appendix.

Elimination of the fast state at $\delta=0$ thus produces the reduced slow plant
\begin{equation}
 u_t(x,t)=u_x(x,t)+\int_0^xf_\varepsilon(x,\xi)u(\xi,t)\,d\xi,
 \qquad u(1,t)=U(t),
 \label{eq:genreslow}
\end{equation}
with the nonstationary spatial Volterra kernel
\begin{equation}
 f_\varepsilon(x,\xi):=C(x)\Phi_\varepsilon(x,\xi)B(\xi),
 \qquad
 |f_\varepsilon(x,\xi)|\le\bar f:=M_CM_P,
 \label{eq:Kgen}
\end{equation}
continuous on the triangle $T:=\{0\le\xi\le x\le1\}$---despite the jump in $z$ that the $\Hs$-valued kernel $\Phi_\varepsilon(x,\xi)B(\xi)$ carries at $\varepsilon=0$ across a moving characteristic; the continuity is proved with Lemma~\ref{lem:evolution}.

\section{Main Result}\label{sec:main}

\begin{theorem}\label{thm:main}
Let Assumption~\ref{as:coeff} hold, let $\varepsilon\in[0,\bar\varepsilon]$, and apply to the plant \eqref{eq:gen-u}--\eqref{eq:gen-zbc} the boundary feedback
\begin{equation}
 U(t)=\int_0^1k_\varepsilon(1,\xi)u(\xi,t)\,d\xi ,
 \label{eq:genU}
\end{equation}
where $k_\varepsilon$ is the unique continuous solution on $T$ of
\begin{equation}
 \partial_xk_\varepsilon(x,\xi)+\partial_\xi k_\varepsilon(x,\xi)
 =\int_\xi^xk_\varepsilon(x,s)f_\varepsilon(s,\xi)\,ds-f_\varepsilon(x,\xi),
 \qquad k_\varepsilon(x,0)=0.
 \label{eq:genq}
\end{equation}
Then, for every $\delta>0$, the closed loop has a unique mild solution, continuous in time into $L^2(0,1)\times L^2((0,1)\times(0,D))$, and, for every $\delta$ obeying
\begin{equation}
 0<\delta<\Delta:=\sup_{\lambda>0,\ \sigma>\sigma_0}\delta^\star(\lambda,\sigma),
 \label{eq:deltarange}
\end{equation}
where
\begin{equation}
 \delta^\star(\lambda,\sigma)
 :=
 \frac{a_f}
 {\displaystyle
 \gamma+\frac{\beta_1}{\lambda}
 \left(\beta_2+\sqrt{\beta_2^2+\frac{\lambda\beta_0^2}{2}}\right)} ,
 \label{eq:deltastar}
\end{equation}
there exist $M\ge1$ and $\alpha>0$, independent of $\varepsilon\in[0,\bar\varepsilon]$, such that the solution obeys
\begin{equation}
 \|u(\cdot,t)\|^2+\|v(\cdot,\cdot,t)\|^2
 \le
 M e^{-\alpha t}
 \left(\|u_0\|^2+\|v_0\|^2\right),
 \qquad t\ge0 ,
 \label{eq:mainestimate}
\end{equation}
in the norms of $L^2(0,1)$ and $L^2((0,1)\times(0,D))$, respectively, where, for $\lambda>0$ and $\sigma>\sigma_0$,
\begin{gather}
 \sigma_0:=\frac{\Lambda_1+\nu_1+2\bar d+2\bar\theta}{\uL},
 \qquad
 a_s:=\lambda,
 \qquad
 a_f:=\tfrac12\big(\sigma\uL-\Lambda_1-\nu_1-2\bar d-2\bar\theta\big),
 \label{eq:asaf}\\
 \gamma:=C_3e^{\sigma/2},
 \qquad
 \beta_1:=(1+\bar k)M_C\,e^{(\lambda+\sigma)/2},
 \qquad
 \beta_2:=C_1,
 \qquad
 \beta_0:=\sqrt2\,C_2 ,
 \label{eq:betas}\\
 C_1:=M_T\big(A_P+M_P^2M_C\big),
 \qquad
 C_2:=B_P,
 \qquad
 C_3:=M_PM_C ,
 \label{eq:C123}\\
 \bar k:=e^{\bar f}-1,
 \qquad
 M_T:=1+\bar k\,e^{\bar k},
 \label{eq:kbarMT}\\
 B_P:=\frac{M_\Phi\sqrt D\,B_0}{\uL},
 \label{eq:BP}\\
 A_P:=M_\Phi\sqrt D\left[
 \frac{B_x+B_0\Lambda_1/\uL}{\uL}
 +\frac{\bar\varepsilon B_{zz}+\bar\nu B_z+\bar dB_0+\bar\theta B_0}{\uL^2}
 \right]
 \nonumber\\
 +\frac{\sqrt D\,B_0}{\uL}
 +N_\Psi\,\frac{B_0\bar\nu+2\bar\varepsilon B_z}{\uL^2},
 \label{eq:AP}\\
 N_\Psi:=\left(\frac{2\bar\Lambda\big[(\tfrac{\rho_1}{2}+M_R)M_\Phi^2+M_\Phi\big]}{\un}\right)^{1/2}.
 \label{eq:NPsi}
\end{gather}
\end{theorem}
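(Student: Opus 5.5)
The proof follows the standard singular-perturbation route, with backstepping on the slow part.

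\emph{Step 1: backstepping transformation.} The kernel equation \eqref{eq:genq} is a Goursat-type problem on $T$ with continuous forcing $f_\varepsilon$ bounded by $\bar f$. I would convert it to an integral equation along the characteristics $x-\xi=$ const and solve it by successive approximations. This gives a unique continuous $k_\varepsilon$ with $|k_\varepsilon|\le\bar k=e^{\bar f}-1$. The inverse kernel then has a similar bound, which produces the constant $M_T=1+\bar k e^{\bar k}$.

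\emph{Step 2: error variable and target system.} Set $w=u-\int_0^x k_\varepsilon(x,\xi)u(\xi)\,d\xi$ and introduce the boundary-layer error $\eta:=v-\PP_\varepsilon u$. Using \eqref{eq:genreslow}, the full plant becomes
\[
 w_t=w_x+(I-K)\,C\eta,\qquad w(1,t)=0,
\]
where $K$ denotes the Volterra operator with kernel $k_\varepsilon$. The fast error satisfies
\[
 \delta\eta_t+\Lambda\eta_x+\nu\eta_z-\varepsilon\eta_{zz}=d\eta+\Theta\eta-\delta\,\PP_\varepsilon u_t ,
\]
with $\eta(0,z)=0$ and the $z$-boundary conditions \eqref{eq:gen-zbc} inherited, because $\PP_\varepsilon u$ satisfies them.

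\emph{Step 3: composite Lyapunov functional.} Take
\[
 V=a_s\!\int_0^1 e^{\lambda x}w^2\,dx+\delta\!\int_0^1\!\!\int_0^D e^{-\sigma x}\eta^2\,dz\,dx .
\]
The slow part gives $-\lambda$-type dissipation, a boundary term $-w(0,t)^2$, and the cross term $2\int e^{\lambda x}w(I-K)C\eta$, controlled by $\beta_1$. For the fast part, integration by parts in $x$ with weight $e^{-\sigma x}$ yields $-\sigma\Lambda$ dissipation minus $\Lambda_1$. In $z$, the inflow condition and $\nu\ge\un$ handle the transport term, the $\varepsilon$ term is signed, and $d,\theta$ are bounded. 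Together these give the rate $a_f$ and the threshold $\sigma>\sigma_0$.

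\emph{Step 4 (main obstacle): the term $\delta\PP_\varepsilon u_t$.} Substituting $u_t=u_x+\int f u+C\eta$ brings in $\PP_\varepsilon u_x$, which is not $L^2$-bounded by $u$. I would integrate by parts in $\xi$ inside \eqref{eq:Pgen}:
\[
 \int_0^x\Phi_\varepsilon(x,\xi)B(\xi)u_\xi\,d\xi=B(x)u(x)-\Phi_\varepsilon(x,0)B(0)u(0)-\int_0^x\partial_\xi\!\big[\Phi_\varepsilon(x,\xi)B(\xi)\big]u(\xi)\,d\xi .
\]
The $\xi$-derivative of the propagator is $-\Phi_\varepsilon(x,\xi)\mathcal A_\varepsilon(\xi)$. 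Applying it to $B(\xi)$ uses $b\in C^2$ in $z$, which is where $A_P$ with its $B_{zz},B_z,B_x$ terms comes from. At $\varepsilon=0$, however, $B(\xi)$ does not satisfy the inflow condition $\phi(0)=0$. This produces a characteristic boundary contribution, which I expect to bound through an auxiliary operator $\Psi$ with the $L^2$ norm $N_\Psi$; this is the delicate step. The result is
\[
 \|\PP_\varepsilon u_x\|\le A_P\|u\|+B_P|u(0)|,
\]
and since $u(0)=w(0)$, the trace term is absorbed by the boundary dissipation $-a_s w(0)^2$. This is exactly where $\beta_0$ enters. Passing from $u$ to $w$ via $M_T$ gives $C_1$, and the term in $C\eta$ gives $C_3=M_PM_C$ and hence $\gamma$.

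\emph{Step 5: closing the estimate.} Apply Young's inequality to the cross terms, optimizing the split weight. This optimization is the quadratic whose root produces $\beta_2+\sqrt{\beta_2^2+\lambda\beta_0^2/2}$. The outcome is $\dot V\le-\alpha V$ whenever $\delta<\delta^\star(\lambda,\sigma)$. Equivalence of $V$ with $\|u\|^2+\|v\|^2$ follows from $M_T$, $\bar k$ and $M_P$, giving \eqref{eq:mainestimate} with constants uniform in $\varepsilon$, because every bound used is.

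\emph{Well-posedness.} The closed loop is a bounded perturbation of a contraction semigroup: transport, or transport with transverse diffusion, on the rectangle, plus the bounded boundary functional handled via the $(w,\eta)$ coordinates. Mild solutions therefore exist and are unique for every $\delta>0$, and the Lyapunov computations done on smooth compatible data extend to them by density.
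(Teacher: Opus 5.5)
The gap is in Step~4. This is the paper's central estimate (Lemma~\ref{lem:Pux}), and your proposal leaves it as an expectation.

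\textbf{The identity you use fails on $B(\xi)$.} The identity $\partial_\xi\Phi_\varepsilon(x,\xi)=-\Phi_\varepsilon(x,\xi)\mathcal A_\varepsilon(\xi)$ holds only on the domain of $\mathcal A_\varepsilon(\xi)$. In general $B(\xi)$ lies outside that domain for every $\varepsilon$, not only at $\varepsilon=0$ as you note: for $\varepsilon>0$ it violates both the flux condition $\varepsilon\phi'(0)=\nu(\xi,0)\phi(0)$ and the Neumann condition $\phi'(D)=0$. The defect acts as a boundary source. Consequently $\partial_\xi[\Phi_\varepsilon(x,\xi)B(\xi)]$ admits no $\Hs$ bound, pointwise in $\xi$, that is uniform in $\varepsilon$:
\begin{itemize}
\item at $\varepsilon=0$ it contains a measure carried by the characteristic issuing from $(\xi,0)$; this is the kernel jump, i.e.\ the term $-b(0)u(x-z)$ in \eqref{eq:excase1};
\item for $\varepsilon>0$ it contains $\Phi_\varepsilon(x,\xi)$ applied to a Dirac mass at $z=0$, with a weight of order $\nu(\xi,0)b(\xi,0)$ that does not vanish as $\varepsilon\downarrow0$. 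Its $\Hs$-norm is of order $(\varepsilon(x-\xi))^{-1/4}$.
\end{itemize}

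\textbf{What is needed, and how the paper gets it.} You need an estimate integrated in $\xi$: an $L^2$ admissibility bound for boundary inputs in $z$, drawn from the transport speed $\nu\ge\un$ rather than from diffusion. The paper obtains this by duality and never differentiates $\Phi_\varepsilon(x,\xi)B(\xi)$. It pairs $\PP_\varepsilon h'$ with a test field $F$ and passes to the backward adjoint state $\Psi$. Since $\Psi$ satisfies the adjoint boundary conditions, every $z$-integration by parts onto $b$ is legitimate; this is \eqref{eq:BAstar}, where traces appear at $z=0$ and, for $\varepsilon>0$, at $z=D$. Those traces are bounded by Lemma~\ref{lem:adjtrace}, $\int_0^1[\Psi(\xi,0)^2+\Psi(\xi,D)^2]\,d\xi\le N_\Psi^2\|F\|^2$, which follows from the outflow dissipation $\tfrac12\nu\Psi^2$ in the adjoint energy identity.

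Your primal decomposition could also be closed, by the dual statement. That is an energy estimate for $V_x=\mathcal A_\varepsilon(x)V$ with an inhomogeneous flux $g$ at $z=0$, whose boundary term $V(0)g-\tfrac12\nu V(0)^2\le g^2/(2\nu)$ is uniform in $\varepsilon$. But one of the two routes has to be carried out. ``An auxiliary operator $\Psi$ with $L^2$ norm $N_\Psi$'' names the constant without supplying the estimate.

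\textbf{Step 3: the weight on the fast energy.} Two further steps need repair. With slow weight $a_s=\lambda$ and the fast energy multiplied by $\delta$, your functional is a multiple of $V_s+\kappa V_f$ with $\kappa=\delta/\lambda$. The fast dissipation $2\kappa a_f/\delta=2a_f/\lambda$ then no longer grows as $\delta\to0$, while the path $\TT C\eta$ from $\eta$ to $w$ carries no $\delta$. Multiplying \eqref{eq:kappacond} by $\delta$ shows that at this $\kappa$ the estimate closes for no $\delta>0$ unless $a_f>\beta_1^2/2$. With $\beta_1$ from \eqref{eq:betas}, this reads $\sigma\uL-\Lambda_1-\nu_1-2\bar d-2\bar\theta>(1+\bar k)^2M_C^2e^{\lambda+\sigma}$. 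That fails for every $\lambda,\sigma$ once $(1+\bar k)^2M_C^2\ge\uL/e$.

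The fast energy must carry a weight $\kappa$ independent of $\delta$. If your $a_s$ was meant as a free weight, it must be set to $\delta/\kappa$, not $\lambda$. The threshold $\delta^\star$ comes from minimizing over $\kappa$, which gives $\kappa^\star=\beta_1/\widehat\beta_2$ and the factor $\beta_2+\widehat\beta_2$ of \eqref{eq:kappastar}. A Young split at fixed weights does not produce it.

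\textbf{Well-posedness.} The $(w,\eta)$ coordinates do not give a bounded perturbation, because $\delta\mathcal Gw$ contains the trace $w(0)$. The argument does work in $(w,v)$. There \eqref{eq:Tidentity} gives $\TT u_x=w_x-\TT\big[\int_0^xf_\varepsilon(x,\xi)u(\xi)\,d\xi\big]$ with $u=\TT^{-1}w$, which is a bounded term. Alternatively, use the fixed point along characteristics of Lemma~\ref{lem:wellposed}.
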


The weights $\lambda$ and $\sigma$ are analysis parameters only; the controller \eqref{eq:genU} does not depend on them, which is what legitimizes the supremum in \eqref{eq:deltarange}. For fixed geometry, $\Delta$ depends only on the coefficients $b,c,d,\theta,\Lambda,\nu$, their regularity bounds, and $\bar\varepsilon$. The kernel $k_\varepsilon$ is designed at the known value of $\varepsilon$; what is uniform in $\varepsilon$ is the admissible range \eqref{eq:deltarange} and the constants $M,\alpha$, not the controller.

To a reader mindful of how differently parabolic and hyperbolic dynamics behave, a single theorem spanning $\varepsilon=0$ and $\varepsilon>0$ may look suspicious. It is not: the claim is an $L^2$ energy statement, blind to the regularity distinction between the two regimes---under the flux and Neumann conditions \eqref{eq:gen-zbc} the diffusion enters every estimate only dissipatively---and the distinction surfaces only inside the proof, in the kernel of $\PP_\varepsilon$, which has a jump along a characteristic at $\varepsilon=0$ and is smoothing for $\varepsilon>0$; the adjoint argument of Lemma~\ref{lem:Pux} treats both alike. The hyperbolic case is still worth recording on its own, and it comes with weaker hypotheses and smaller constants.

\begin{corollary}\label{cor:transport}
Let $\bar\varepsilon=0$, so that the fast subsystem \eqref{eq:gen-v} is a pure transport PDE with the inflow conditions $v(0,z,t)=0$, $v(x,0,t)=0$, and let Assumption~\ref{as:coeff} hold with $b\in C^1$ only. Then Theorem~\ref{thm:main} holds verbatim with
\begin{equation}
 A_P=M_\Phi\sqrt D\left[
 \frac{B_x+B_0\Lambda_1/\uL}{\uL}
 +\frac{\bar\nu B_z+\bar dB_0+\bar\theta B_0}{\uL^2}
 \right]
 +\frac{\sqrt D\,B_0}{\uL}
 +N_\Psi\,\frac{B_0\bar\nu}{\uL^2} .
 \label{eq:APtransport}
\end{equation}
\end{corollary}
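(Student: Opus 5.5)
The plan is to rerun the proof of Theorem~\ref{thm:main} at $\bar\varepsilon=0$, and to track exactly where the hypothesis $b(x,\cdot)\in C^2$ and the $\bar\varepsilon$-dependent terms in \eqref{eq:AP} enter. The rest of that proof does not use $b_{zz}$, namely:
\begin{itemize}
\item the well-posedness of $\Phi_0$ and $\PP_0$ from Lemma~\ref{lem:evolution}, with the bounds \eqref{eq:Phibound};
\item the continuity of $f_0$ on $T$;
\item the kernel equation \eqref{eq:genq} and the bounds $\bar k$, $M_T$;
\item the backstepping transformation to the target system;
\item the composite Lyapunov functional with weights $e^{-\lambda x}$ and $e^{-\sigma x}$, and the optimization over the slow--fast weighting that produces $\delta^\star(\lambda,\sigma)$ and $\Delta$.
\end{itemize}
Each of these uses only $\Lambda,\nu\in C^1$, $b\in C^1$, the continuity of $c,d,\theta$, and the constants in \eqref{eq:speedconst}--\eqref{eq:cdthetaconst}. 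So these steps carry over verbatim, with all constants evaluated at $\bar\varepsilon=0$. In particular $\sigma_0$, $a_f$, $\gamma$, $\beta_1$, $\beta_2=C_1$ and $\beta_0$ keep their form; only $A_P$, inside $C_1$, changes.

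The only place $b$ is differentiated is the bound on $A_P$. It enters through Lemma~\ref{lem:Pux}, the estimate of $\PP_\varepsilon u_x$, or equivalently $\partial_t\PP_\varepsilon u$, by $A_P\|u\|$ plus a multiple of the outflow trace. That lemma integrates by parts in $\xi$ in $\int_0^x\Phi_\varepsilon(x,\xi)B(\xi)u_\xi(\xi)\,d\xi$. This produces the boundary terms $B(x)u(x)$, which gives the summand $\sqrt D B_0/\uL$, and $\Phi_\varepsilon(x,0)B(0)u(0)$. It also produces the interior term $\int_0^x\big[\partial_\xi B(\xi)-\mathcal A_\varepsilon(\xi)B(\xi)\big]u\,d\xi$, using $\partial_\xi\Phi_\varepsilon(x,\xi)=-\Phi_\varepsilon(x,\xi)\mathcal A_\varepsilon(\xi)$ on the domain.

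I will redo this computation at $\varepsilon=0$ and proceed in three steps.
\begin{itemize}
\item $\partial_\xi B$ gives $(B_x+B_0\Lambda_1/\uL)/\uL$, as before.
\item $\mathcal A_0(\xi)B(\xi)=\Lambda^{-1}\big[-\nu b_z+db+\int\theta b\big]/\Lambda$ needs only $b_z$, hence $b\in C^1$. It is bounded by $(\bar\nu B_z+\bar dB_0+\bar\theta B_0)/\uL^2$, and the $\bar\varepsilon B_{zz}$ term is simply absent.
\item $B(\xi)$ does not lie in the domain of $\mathcal A_0$: generally $b(\xi,0)\ne0$, while the domain requires $\phi(0)=0$. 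The defect is a boundary flux at $z=0$ of size $\nu(\xi,0)b(\xi,0)/\Lambda^2$. For $\varepsilon>0$ the analogous defect is $\nu b-\varepsilon b_z$ at $z=0$, plus $\varepsilon b_z$ at $z=D$, which is the origin of $B_0\bar\nu+2\bar\varepsilon B_z$.
\end{itemize}
Via the adjoint argument of Lemma~\ref{lem:Pux}, this flux is controlled by the trace operator $\Psi$, whose bound $N_\Psi$ comes from the energy identity for the adjoint transport problem. In that identity the outflow trace at $z=0$ is weighted by $\nu\ge\un$, and this is what yields the $1/\un$ in \eqref{eq:NPsi}. At $\varepsilon=0$ the defect therefore contributes $N_\Psi B_0\bar\nu/\uL^2$.

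Summing these contributions gives \eqref{eq:APtransport}, and substituting it into $C_1$ reproduces the theorem verbatim.

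The main obstacle is the third step. With $b\in C^1$ only, one has to justify the integration by parts and the adjoint trace bound in the hyperbolic setting, where the kernel of $\PP_0$ jumps along the characteristic issuing from $(\xi,0)$. I would first prove the identity for data $b$ vanishing near $z=0$. There everything is classical, and only $B_0$, $B_z$, $B_x$ enter. I would then approximate a general $b\in C^1$ in $C^1$ norm while keeping the trace $b(\cdot,0)$ fixed through the explicit boundary term. Since every constant is continuous under this limit, \eqref{eq:APtransport} persists, and the case $\varepsilon>0$ is never invoked.
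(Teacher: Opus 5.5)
Your proposal is correct and is the paper's own argument, which simply sets $\varepsilon=0$ throughout the proof of Theorem~\ref{thm:main}. As you say, the only place $b$ is differentiated is Lemma~\ref{lem:Pux}, where at $\varepsilon=0$ the $\bar\varepsilon B_{zz}$ and $2\bar\varepsilon B_z$ contributions coming from \eqref{eq:BAstar} vanish, leaving a single integration by parts in $z$ against $b$ and the trace term $B_0\bar\nu\,|\Psi(\xi,0)|/\uL^2$ controlled by Lemma~\ref{lem:adjtrace}. Two small corrections: the slow weight is $e^{\lambda x}$, not $e^{-\lambda x}$, and your extra approximation in $b$ is unnecessary and, as phrased, unworkable. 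It is unnecessary because the paper justifies Lemma~\ref{lem:Pux} by density in the test field $F$ and never differentiates the kernel of $\PP_0$, so the kernel's jump along a characteristic causes no difficulty. It is unworkable because a $b$ with $b(\cdot,0)\ne0$ cannot be approximated in $C^1$ by data vanishing near $z=0$; the trace term would have to emerge as the limit of a cutoff boundary layer instead.
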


\begin{proof}
Set $\varepsilon=0$ throughout the proof of Theorem~\ref{thm:main}: every term carrying $\varepsilon$ vanishes, and $b_{zz}$ never arises, so $b(x,\cdot)\in C^1$ suffices.
\end{proof}

\begin{remark}
No dissipativity or smallness assumption on $d$ or $\theta$ is made: only boundedness. The fast generator is multiplied by the $1/\delta$ time scale, so a sufficiently large weight decay rate $\sigma$ in the axial direction makes the transport outflow dominate the bounded zero-order and nonlocal terms, as \eqref{eq:asaf} shows; the weight is taken in $x$ only because a weight decaying in $z$ would be incompatible with a nonlocal coupling $\theta$ acting across the whole interval $(0,D)$.
\end{remark}

\section{Proof of Theorem \ref{thm:main}}\label{sec:proof}
Existence and uniqueness of the closed-loop solution are Lemma~\ref{lem:wellposed} in the appendix; this section proves the estimate \eqref{eq:mainestimate}.

\subsection{Transformation, its identity, and exact target system}
Introduce the Volterra transformation
\begin{equation}
 w(x,t)=(\TT u)(x,t):=u(x,t)-\int_0^xk_\varepsilon(x,\xi)u(\xi,t)\,d\xi ;
 \label{eq:genT}
\end{equation}
$\TT$ is boundedly invertible on $L^2(0,1)$, uniformly in $\varepsilon$ (Lemma~\ref{lem:invert} in the appendix). If $u$ satisfies $u_t=u_x+\int_0^xf_\varepsilon(x,\xi)u(\xi)\,d\xi+g(x)$ with any forcing $g\in L^2(0,1)$, then, integrating $\int_0^xk_\varepsilon(x,\xi)u_\xi(\xi)\,d\xi$ by parts and inserting \eqref{eq:genq},
\begin{equation}
 w_t=w_x+(\TT g)(x),
 \label{eq:Tidentity}
\end{equation}
and \eqref{eq:genU} enforces $w(1,t)=0$; for kernels that are only continuous the computation is justified in the weak formulation, exactly as in the $L^\infty$-kernel setting of \cite{HuDiMeglio2016}. The lower limit in \eqref{eq:genT} gives the trace identity
\begin{equation}
 w(0,t)=u(0,t).
 \label{eq:traceidentity}
\end{equation}
At $\delta=0$, where $g=0$, this establishes the design claim of Section~\ref{sec:main}: the reduced closed loop is the exponentially stable transport equation $w_t=w_x$, $w(1,t)=0$.

For $\delta>0$, define the quasi-steady-state error
\begin{equation}
 \eta(x,z,t):=v(x,z,t)-(\PP_\varepsilon u)(x,z,t),
 \label{eq:etadef}
\end{equation}
where $\PP_\varepsilon$ acts on $u(\cdot,t)$ at each frozen $t$. By Lemma~\ref{lem:evolution},
\begin{equation}
 \eta(0,z,t)=0,
 \qquad
 \nu(x,0)\eta(x,0,t)-\varepsilon\eta_z(x,0,t)=0,
 \qquad
 \eta_z(x,D,t)=0\quad(\varepsilon>0).
 \label{eq:etabc}
\end{equation}
Since $v=\PP_\varepsilon u+\eta$ and $C(x)(\PP_\varepsilon u)(x)=\int_0^xf_\varepsilon(x,\xi)u(\xi)\,d\xi$, the slow equation \eqref{eq:gen-u} reads
\begin{equation}
 u_t=u_x+\int_0^xf_\varepsilon(x,\xi)u(\xi,t)\,d\xi+C(\cdot)\eta,
 \label{eq:usplit}
\end{equation}
and the identity \eqref{eq:Tidentity} with $g=C(\cdot)\eta$ gives the exact slow target
\begin{equation}
 w_t=w_x+\TT\big[C(\cdot)\eta\big],\qquad w(1,t)=0 .
 \label{eq:wtarget}
\end{equation}
The quasi-steady state satisfies, pointwise a.e.\ in the rectangle,
\begin{equation}
 \Lambda(\PP_\varepsilon u)_x+\nu(\PP_\varepsilon u)_z-\varepsilon(\PP_\varepsilon u)_{zz}
 =d\,(\PP_\varepsilon u)+\int_0^D\theta(x,\cdot,\zeta)(\PP_\varepsilon u)(x,\zeta)\,d\zeta+b\,u,
 \label{eq:qss}
\end{equation}
which is \eqref{eq:Hspatial} written in the original variables. Subtracting \eqref{eq:qss} from \eqref{eq:gen-v} and using $\partial_t\PP_\varepsilon u=\PP_\varepsilon u_t$ (the kernel of $\PP_\varepsilon$ is time-independent) leaves the quasi-steady-state motion term $-\delta\,\PP_\varepsilon u_t$ in the fast equation. This term is not left in that form: substituting $u_t$ from the closed-loop slow equation \eqref{eq:usplit} and $u=\TT^{-1}w$ expresses it through the transformed states alone, and the exact fast target becomes
\begin{equation}
 \delta\eta_t+\Lambda\eta_x+\nu\eta_z-\varepsilon\eta_{zz}
 =d\,\eta+\int_0^D\theta(x,\cdot,\zeta)\eta(x,\zeta,t)\,d\zeta
 -\delta\,\PP_\varepsilon C(\cdot)\eta
 -\delta\,\mathcal Gw,
 \label{eq:etatarget}
\end{equation}
with the homogeneous conditions \eqref{eq:etabc} and the interconnection operator
\begin{equation}
 \mathcal G:=\PP_\varepsilon\big(\partial_x+C(\cdot)\PP_\varepsilon\big)\TT^{-1} .
 \label{eq:Gdef}
\end{equation}
The pair \eqref{eq:wtarget}, \eqref{eq:etatarget} is the closed loop written entirely in $(w,\eta)$: a feedback interconnection of the exponentially stable slow transport and the fast transport(-diffusion), in which the path from $\eta$ to $w$, through $\TT C$, is of unit order, while the return path from $w$ to $\eta$, through $\mathcal G$, and the self-loop through $\PP_\varepsilon C$, both carry the factor $\delta$. Since the loop gain is $O(\delta)$, a small-gain argument in the sense of \cite{KarafyllisKrstic2019} suggests itself; but the return path acts through both $\|w\|$ and the outflow trace $w(0,t)$ (Lemma~\ref{lem:Put} below), and no estimate of $|w(0,t)|$ by $\|w(t)\|$ exists---what the slow target supplies, through its boundary dissipation, is a bound on $\int_0^tw(0,s)^2ds$. A small-gain composition must therefore pair estimates integrated in time, and the composite Lyapunov functional below, with its optimized weight, is that composition, yielding the explicit, maximized $\delta^\star$.

\subsection{Bound on interconnection operator}
The operator $\mathcal G$ contains $\PP_\varepsilon\partial_x\TT^{-1}$, and bounding $\PP_\varepsilon\partial_x$ through the $H^1$ norm of its argument would be useless, since no bound on spatial derivatives of the state is available. The central estimate of the paper, Lemma~\ref{lem:Pux} below, shows that the derivative is absorbed entirely, leaving only the $L^2$ norm and the outflow trace of the argument. It is proved by duality: the derivative is moved onto the adjoint state, whose boundary traces are controlled by the adjoint outflow dissipation, uniformly in $\varepsilon$.

The adjoint of $\mathcal A_\varepsilon(x)$ is
\begin{equation}
 \mathcal A_\varepsilon(x)^*\psi
 =\frac{1}{\Lambda(x)}
 \Big[\varepsilon\psi''+\big(\nu(x,\cdot)\psi\big)_z+d(x,\cdot)\psi
 +\int_0^D\theta(x,\zeta,\cdot)\psi(\zeta)\,d\zeta\Big],
 \label{eq:Aepsstar}
\end{equation}
on $\{\psi\in H^2:\psi'(0)=0,\ \varepsilon\psi'(D)+\nu(x,D)\psi(D)=0\}$ for $\varepsilon>0$ and $\{\psi\in H^1:\psi(D)=0\}$ for $\varepsilon=0$, as the boundary pairing shows.

\begin{lemma}[Traces of adjoint state, uniform in $\varepsilon$]\label{lem:adjtrace}
Let $F\in L^2_x(\Hs)$ and let $\Psi(\xi):=\int_\xi^1\Phi_\varepsilon(x,\xi)^*F(x)\,dx$ be the mild solution of the backward problem $-\Psi'=\mathcal A_\varepsilon(\xi)^*\Psi+F(\xi)$ with $\Psi(1)=0$. Then $\|\Psi(\xi)\|_\Hs\le M_\Phi\|F\|$ for all $\xi$, and
\begin{equation}
 \int_0^1\Psi(\xi,0)^2\,d\xi\le N_\Psi^2\|F\|^2,
 \qquad
 \int_0^1\Psi(\xi,D)^2\,d\xi\le N_\Psi^2\|F\|^2,
 \label{eq:adjtrace}
\end{equation}
with $N_\Psi$ from \eqref{eq:NPsi}, uniformly in $\varepsilon\in[0,\bar\varepsilon]$.
\end{lemma}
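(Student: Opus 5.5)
The plan is an energy estimate for the backward adjoint problem, run in the reversed variable $\xi$ with the same exponential weight that produced $M_\Phi$ in \eqref{eq:Phibound}.

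\emph{Step 1: the pointwise bound.} Since $\Phi_\varepsilon(x,\xi)^*$ is the adjoint of an operator of norm at most $M_\Phi$, the triangle inequality and Cauchy--Schwarz give
\begin{equation*}
 \|\Psi(\xi)\|_\Hs\le\int_\xi^1 M_\Phi\|F(x)\|_\Hs\,dx\le M_\Phi\|F\| .
\end{equation*}
This settles the first claim immediately.

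\emph{Step 2: the energy identity.} I will do the computation for smooth $F$ and $\varepsilon>0$ (or smooth data at $\varepsilon=0$), and extend by density. Take the $\Hs$ inner product of $-\Psi'=\mathcal A_\varepsilon(\xi)^*\Psi+F$ with $\Lambda(\xi)\Psi$. The individual terms are handled as follows.

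\begin{itemize}
\item The left side gives $-\frac{d}{d\xi}\big(\tfrac12\Lambda\|\Psi\|^2\big)+\tfrac12\Lambda'\|\Psi\|^2$.
\item The diffusion term gives $\varepsilon\langle\psi'',\psi\rangle=\varepsilon[\psi\psi']_0^D-\varepsilon\|\psi'\|^2$. Using the adjoint boundary conditions $\psi'(0)=0$ and $\varepsilon\psi'(D)=-\nu(\xi,D)\psi(D)$, this equals $-\nu(\xi,D)\psi(D)^2-\varepsilon\|\psi'\|^2$.
\item The transport term gives $\langle(\nu\psi)_z,\psi\rangle=\tfrac12[\nu\psi^2]_0^D+\tfrac12\int\nu_z\psi^2$.
\end{itemize}

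Combining the two $D$-boundary contributions leaves
\begin{equation*}
 -\tfrac12\nu(\xi,D)\psi(D)^2-\tfrac12\nu(\xi,0)\psi(0)^2-\varepsilon\|\psi'\|^2 .
\end{equation*}
At $\varepsilon=0$ the condition $\psi(D)=0$ yields the same expression, without the $\varepsilon$ term. The key point is that both traces appear with favorable sign and coefficient at least $\un/2$, uniformly in $\varepsilon$. The remaining terms $\langle d\psi,\psi\rangle$ and $\langle\theta^*\psi,\psi\rangle$ are bounded by $(\bar d+\bar\theta)\|\psi\|^2$.

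\emph{Step 3: integration over $\xi\in(0,1)$.} Using $\Psi(1)=0$ and dropping the nonnegative term $\tfrac12\Lambda(0)\|\Psi(0)\|^2$ leads to
\begin{equation*}
 \frac{\un}{2}\int_0^1\big(\Psi(\xi,0)^2+\Psi(\xi,D)^2\big)d\xi
 \le\int_0^1\Big[\big(\tfrac12\Lambda_1+\tfrac12\nu_1+\bar d+\bar\theta\big)\|\Psi\|^2+\bar\Lambda\|F\|_\Hs\|\Psi\|_\Hs\Big]d\xi .
\end{equation*}
To match the constant in \eqref{eq:NPsi} I will instead divide through by $\Lambda$ before pairing, i.e.\ work with the $\rho=\nu/\Lambda$ form of the generator, so that the zero-order coefficient becomes $\rho_1/2+M_R$.

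I will then insert $\|\Psi(\xi)\|\le M_\Phi\|F\|$ from Step 1 and $\int\|F\|\|\Psi\|\le M_\Phi\|F\|^2$. The boundary factor $\nu/\Lambda\ge\un/\bar\Lambda$ produces the factor $2\bar\Lambda/\un$ in front, which gives $N_\Psi^2$.

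\emph{Main obstacle.} I expect the hard part to be the rigorous justification rather than the algebra. Traces of a mild solution must be given meaning at $\varepsilon=0$, where $\Psi$ may jump along characteristics. The trace bound must also survive as $\varepsilon\to0$, which is why the $\varepsilon\|\psi'\|^2$ term is discarded rather than used.

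My plan for this step is to prove the identity for smooth $F$, where the solution is regular enough (parabolic smoothing for $\varepsilon>0$, characteristics for $\varepsilon=0$). The resulting uniform bound then shows that the trace map $F\mapsto\Psi(\cdot,0),\Psi(\cdot,D)$ extends continuously to $L^2_x(\Hs)$, with the same constant, via Lemma~\ref{lem:evolution}.
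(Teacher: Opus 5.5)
Your proposal is correct and is essentially the paper's proof: the pointwise bound comes from $\|\Phi_\varepsilon(x,\xi)^*\|=\|\Phi_\varepsilon(x,\xi)\|\le M_\Phi$, and the trace bounds come from the adjoint energy identity for the $1/\Lambda$-normalized generator. In that identity both traces enter with coefficient $\nu/(2\Lambda)\ge\underline\nu/(2\bar\Lambda)$, $\tfrac12\|\Psi(0)\|_\Hs^2$ and $\varepsilon\|\Psi_z\|^2$ are discarded, and inserting the pointwise bound gives exactly $N_\Psi^2$. Your preliminary pairing with $\Lambda\Psi$ is also valid and yields the alternative constant $\frac{2}{\underline\nu}\big[(\tfrac12\Lambda_1+\tfrac12\nu_1+\bar d+\bar\theta)M_\Phi^2+\bar\Lambda M_\Phi\big]$; the switch to the $\rho$-form that you then make is what reproduces \eqref{eq:NPsi}.
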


\begin{proof}
The pointwise bound follows from $\|\Phi_\varepsilon^*\|=\|\Phi_\varepsilon\|\le M_\Phi$. The adjoint energy identity $-\tfrac12\frac{d}{d\xi}\|\Psi\|_\Hs^2=\langle\Psi,\mathcal A_\varepsilon^*\Psi\rangle+\langle\Psi,F\rangle$ carries the boundary computation dual to the primal one in the proof of Lemma~\ref{lem:evolution}: with the adjoint boundary conditions,
\begin{align}
 \Lambda(\xi)\langle\psi,\mathcal A_\varepsilon^*\psi\rangle_\Hs
 =&-\tfrac12\nu(\xi,0)\psi(0)^2-\tfrac12\nu(\xi,D)\psi(D)^2
 -\varepsilon\|\psi_z\|_\Hs^2
 \nonumber\\
 &+\tfrac12\langle\nu_z\psi,\psi\rangle_\Hs
 +\langle d\psi,\psi\rangle_\Hs+\langle\psi,\Theta^*\psi\rangle_\Hs ,
 \label{eq:adjdiss}
\end{align}
where $\Theta^*$ is the transposed integral operator, of the same Hilbert--Schmidt bound $\bar\theta$; at $\varepsilon=0$ the trace at $z=D$ is replaced by $\psi(D)=0$ and the identity holds with the corresponding terms absent. Integrating the identity over $(0,1)$, discarding $\tfrac12\|\Psi(0)\|^2\ge0$, and using $\Lambda\le\bar\Lambda$, $\nu\ge\un$, and the pointwise bound on $\|\Psi\|$ gives
\begin{equation}
 \frac{\un}{2\bar\Lambda}\int_0^1\big[\Psi(\xi,0)^2+\Psi(\xi,D)^2\big]d\xi
 \le\Big(\frac{\rho_1}{2}+M_R\Big)M_\Phi^2\|F\|^2+M_\Phi\|F\|^2,
 \label{eq:adjtraceproof}
\end{equation}
which is \eqref{eq:adjtrace}.
\end{proof}

\begin{lemma}[Derivative absorption by $\PP_\varepsilon$]\label{lem:Pux}
Under Assumption~\ref{as:coeff}, for every $\varepsilon\in[0,\bar\varepsilon]$ and every $h\in H^1(0,1)$,
\begin{equation}
 \|\PP_\varepsilon h'\|_{L^2_x(\Hs)}
 \le A_P\|h\|+B_P|h(0)|,
 \label{eq:Puxbound}
\end{equation}
with $A_P$, $B_P$ from \eqref{eq:AP}, \eqref{eq:BP}.
\end{lemma}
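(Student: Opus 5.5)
By duality, I will bound $\|\PP_\varepsilon h'\|$ through its pairing with a test field $F\in L^2_x(\Hs)$ with $\|F\|\le1$. Using \eqref{eq:Pgen} and Fubini,
\begin{equation*}
 \langle F,\PP_\varepsilon h'\rangle
 =\int_0^1\int_\xi^1\big\langle F(x),\Phi_\varepsilon(x,\xi)B(\xi)h'(\xi)\big\rangle_\Hs\,dx\,d\xi
 =\int_0^1h'(\xi)\,\Big\langle \Psi(\xi),\frac{b(\xi,\cdot)}{\Lambda(\xi)}\Big\rangle_\Hs d\xi ,
\end{equation*}
with $\Psi$ the adjoint state of Lemma~\ref{lem:adjtrace}. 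Set $m(\xi):=\langle\Psi(\xi),b(\xi,\cdot)/\Lambda(\xi)\rangle_\Hs$. Then I integrate by parts in $\xi$. Since $\Psi(1)=0$, the boundary term is $-h(0)m(0)$, and $|m(0)|\le M_\Phi\|F\|\sqrt D B_0/\uL=B_P\|F\|$. This accounts for the term $B_P|h(0)|$ exactly. What remains is $-\int_0^1h(\xi)m'(\xi)\,d\xi$, so it suffices to show $\|m'\|_{L^2(0,1)}\le A_P\|F\|$. The proof is done for smooth $F$ and $h$ and extended by density.

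Next I differentiate $m$. The term $(b/\Lambda)_\xi=b_\xi/\Lambda-b\Lambda'/\Lambda^2$ paired with $\Psi$ gives the contribution $M_\Phi\sqrt D(B_x+B_0\Lambda_1/\uL)/\uL$. The other piece is $\langle\Psi',b/\Lambda\rangle$, for which I substitute $\Psi'=-\mathcal A_\varepsilon(\xi)^*\Psi-F(\xi)$:
\begin{itemize}
\item The $F$ term contributes $\|F\|\sqrt D B_0/\uL$.
\item In $\langle\mathcal A_\varepsilon^*\Psi,b\rangle/\Lambda$, I move all $z$-derivatives off $\Psi$ and onto $b$ by integrating by parts in $z$. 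This gives $\Lambda^{-2}\langle\Psi,\varepsilon b_{zz}-\nu b_z+db+\Theta b\rangle$, which is bounded by $M_\Phi\sqrt D(\bar\varepsilon B_{zz}+\bar\nu B_z+\bar dB_0+\bar\theta B_0)/\uL^2$ pointwise in $\xi$.
\item The integrations by parts also leave $z$-boundary terms. From $(\nu\Psi)_z$ paired with $b$ they are $\nu\Psi b\big|_0^D$, and from $\varepsilon\Psi''$ paired with $b$ they are $\varepsilon(\Psi'b-\Psi b_z)\big|_0^D$.
\end{itemize}

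The main obstacle is handling these boundary terms uniformly in $\varepsilon$. Using the adjoint boundary conditions $\Psi'(0)=0$ and $\varepsilon\Psi'(D)=-\nu(\xi,D)\Psi(D)$, the $\Psi'$ traces cancel against part of the $\nu$ terms. What survives is combinations of $\Psi(\xi,0)$ and $\Psi(\xi,D)$ with coefficients bounded by $B_0\bar\nu+2\bar\varepsilon B_z$, divided by $\Lambda^2\ge\uL^2$. At $\varepsilon=0$ only $\Psi(\xi,0)$ survives, because $\Psi(D)=0$. These traces are not bounded pointwise in $\xi$, but they are square-integrable in $\xi$ by Lemma~\ref{lem:adjtrace}, with bound $N_\Psi\|F\|$. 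I will check carefully whether the two trace contributions combine into a single factor $N_\Psi$ as in \eqref{eq:AP}; the combined bound $\int(\Psi(\xi,0)^2+\Psi(\xi,D)^2)\le N_\Psi^2\|F\|^2$ from \eqref{eq:adjtraceproof} gives this after a Cauchy--Schwarz on the pair. It is exactly here that the $L^2_\xi$ (rather than pointwise) nature of the trace control forces the estimate on $m'$ to be in $L^2(0,1)$.

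Collecting the three contributions gives $\|m'\|\le A_P\|F\|$. Together with the boundary term, $|\langle F,\PP_\varepsilon h'\rangle|\le (A_P\|h\|+B_P|h(0)|)\|F\|$, and taking the supremum over $\|F\|\le1$ yields \eqref{eq:Puxbound}.
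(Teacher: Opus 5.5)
Your proposal is correct and follows the paper's proof essentially step for step. The shared steps are: duality with the adjoint state $\Psi$; integration by parts in $\xi$ using $\Psi(1)=0$, which isolates the $h(0)$ term with constant $B_P$; substitution of the adjoint equation; transfer of all $z$-derivatives onto $b$, with the adjoint boundary conditions cancelling the $\varepsilon\Psi_z$ traces; and control of the surviving traces $\Psi(\xi,0)$, $\Psi(\xi,D)$ in $L^2_\xi$ by Lemma~\ref{lem:adjtrace}. The only cosmetic difference is that you combine the two trace terms by a pointwise Cauchy--Schwarz against the joint bound \eqref{eq:adjtraceproof}, giving the factor $\sqrt{(B_0\bar\nu+\bar\varepsilon B_z)^2+(\bar\varepsilon B_z)^2}\,N_\Psi/\uL^2$, which is no larger than the paper's $N_\Psi(B_0\bar\nu+2\bar\varepsilon B_z)/\uL^2$ obtained by bounding each trace separately.
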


\begin{proof}
It suffices to bound $\langle\PP_\varepsilon h',F\rangle$ by $(A_P\|h\|+B_P|h(0)|)\|F\|$ for $F$ in a dense class of smooth fields compatible with the adjoint boundary conditions, for which $\Psi$ is a classical solution; the estimate extends by density. By Fubini,
\begin{equation}
 \langle\PP_\varepsilon h',F\rangle
 =\int_0^1h'(\xi)\,q(\xi)\,d\xi,
 \qquad
 q(\xi):=\langle B(\xi),\Psi(\xi)\rangle_\Hs,
 \label{eq:hdef}
\end{equation}
and, since $q(1)=0$, integration by parts in $\xi$ gives
\begin{equation}
 \langle\PP_\varepsilon h',F\rangle
 =-h(0)q(0)-\int_0^1h(\xi)q'(\xi)\,d\xi,
 \qquad
 q'=\langle\partial_\xi B,\Psi\rangle-\langle B,\mathcal A_\varepsilon^*\Psi\rangle-\langle B,F\rangle .
 \label{eq:hIBP}
\end{equation}
The middle term is computed by moving all $z$-derivatives onto $b(\xi,\cdot)=\Lambda(\xi)B(\xi)$:
\begin{align}
 \Lambda(\xi)^2\langle B,\mathcal A_\varepsilon^*\Psi\rangle
 =&\ \varepsilon\big[b\Psi_z\big]_0^D-\varepsilon\big[b_z\Psi\big]_0^D
 +\big[b\nu\Psi\big]_0^D
 \nonumber\\
 &\ +\varepsilon\langle b_{zz},\Psi\rangle
 -\langle\nu b_z,\Psi\rangle
 +\langle db,\Psi\rangle+\langle b,\Theta^*\Psi\rangle
 \nonumber\\
 =&\ -\varepsilon b_z(\xi,D)\Psi(\xi,D)
 +\big[\varepsilon b_z(\xi,0)-b(\xi,0)\nu(\xi,0)\big]\Psi(\xi,0)
 \nonumber\\
 &\ +\varepsilon\langle b_{zz},\Psi\rangle-\langle\nu b_z,\Psi\rangle
 +\langle db,\Psi\rangle+\langle b,\Theta^*\Psi\rangle,
 \label{eq:BAstar}
\end{align}
where the last equality used the adjoint boundary conditions: $\Psi_z(\xi,0)=0$ eliminates the $\varepsilon$-trace at $z=0$, and $\varepsilon b(D)\Psi_z(\xi,D)=-b(D)\nu(\xi,D)\Psi(\xi,D)$ cancels the transport trace $b(D)\nu(D)\Psi(\xi,D)$ exactly; at $\varepsilon=0$ the same expression results with the $\varepsilon$-terms absent and $\Psi(\xi,D)=0$. Now estimate the terms of \eqref{eq:hIBP}. The boundary term: $|h(0)q(0)|\le|h(0)|\,(\sqrt D\,B_0/\uL)\,M_\Phi\|F\|$, which is \eqref{eq:BP}. The volume terms: $\|\partial_\xi B\|_\Hs\le\sqrt D(B_x+B_0\Lambda_1/\uL)/\uL$, the interior part of \eqref{eq:BAstar} is bounded by $\sqrt D(\bar\varepsilon B_{zz}+\bar\nu B_z+\bar dB_0+\bar\theta B_0)\|\Psi(\xi)\|_\Hs/\uL^2$, and $|\langle B,F\rangle|\le(\sqrt D\,B_0/\uL)\|F(\xi)\|_\Hs$; using $\|\Psi(\xi)\|\le M_\Phi\|F\|$ and Cauchy--Schwarz in $\xi$, their contribution to $\int|h||q'|$ is the first two groups of \eqref{eq:AP} times $\|h\|\,\|F\|$. The trace terms of \eqref{eq:BAstar}, of sizes $(B_0\bar\nu+\bar\varepsilon B_z)|\Psi(\xi,0)|/\uL^2$ and $\bar\varepsilon B_z|\Psi(\xi,D)|/\uL^2$, are paired with $h$ by Cauchy--Schwarz in $\xi$ and bounded through Lemma~\ref{lem:adjtrace}, giving the last group of \eqref{eq:AP}.
\end{proof}

In Lemma~\ref{lem:Pux}, the derivative generates only the $L^2$ norm and the single outflow trace of the argument; applied at $h=\TT^{-1}w$ inside $\mathcal G$, that trace is $u(0)=w(0)$ by \eqref{eq:traceidentity}, which the slow transport target already dissipates. And the proof never differentiates the kernel of $\PP_\varepsilon$; this is what places the kernel jump at $\varepsilon=0$ and the parabolic smoothing at $\varepsilon>0$ under one argument.

\begin{lemma}[Bounds on $\mathcal Gw$ and $\PP_\varepsilon C\eta$]\label{lem:Put}
For $w=\TT u$ with $u\in H^1(0,1)$, and for $\eta\in L^2((0,1)\times(0,D))$,
\begin{equation}
 \|\mathcal Gw\|_{L^2_x(\Hs)}
 \le C_1\|w\|+C_2|w(0)|,
 \qquad
 \|\PP_\varepsilon C(\cdot)\eta\|_{L^2_x(\Hs)}
 \le C_3\|\eta\|,
 \label{eq:Putbound}
\end{equation}
with $C_1,C_2,C_3$ from \eqref{eq:C123}, $\|w\|$ the $L^2(0,1)$ norm, and $\|\eta\|$ the $L^2((0,1)\times(0,D))$ norm.
\end{lemma}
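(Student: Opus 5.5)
The proof splits $\mathcal Gw$ through its definition \eqref{eq:Gdef}. Set $u=\TT^{-1}w$, so that
\[
\mathcal Gw=\PP_\varepsilon u_x+\PP_\varepsilon C(\cdot)\PP_\varepsilon u .
\]
The two pieces are bounded separately, and in each the norms of $u$ are converted back into norms of $w$.

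\textbf{Derivative piece.} Lemma~\ref{lem:Pux} with $h=u\in H^1(0,1)$ gives
\[
\|\PP_\varepsilon u_x\|\le A_P\|u\|+B_P|u(0)|.
\]
The trace identity \eqref{eq:traceidentity} replaces $u(0)$ by $w(0)$, so the trace coefficient is exactly $B_P=C_2$. For the $L^2$ norm I will use $\|u\|\le M_T\|w\|$, with $M_T=1+\bar k e^{\bar k}$ from \eqref{eq:kbarMT}. This is the uniform-in-$\varepsilon$ inverse bound of Lemma~\ref{lem:invert}.

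To justify it directly: $|k_\varepsilon|\le\bar k=e^{\bar f}-1$ follows from the successive-approximation series for \eqref{eq:genq}, using $|f_\varepsilon|\le\bar f$ from \eqref{eq:Kgen}. The inverse kernel $l$ then obeys a Volterra resolvent equation and satisfies $|l|\le\bar k e^{\bar k}$ by the same iteration. Hence $\|\TT^{-1}\|\le 1+\bar k e^{\bar k}$.

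\textbf{Composite piece.} The operator $C(x)$ maps $L^2_x(\Hs)$ into $L^2(0,1)$ with norm at most $M_C$, by Cauchy--Schwarz in $z$ for each $x$. Combined with the bound $\|\PP_\varepsilon\|\le M_P$ from \eqref{eq:Phibound}, this gives
\[
\|\PP_\varepsilon C\PP_\varepsilon u\|\le M_P^2M_C\|u\|\le M_P^2M_CM_T\|w\|.
\]

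\textbf{Assembling.} Adding the two pieces,
\[
\|\mathcal Gw\|\le M_T(A_P+M_P^2M_C)\|w\|+B_P|w(0)|=C_1\|w\|+C_2|w(0)|.
\]
The coefficient of $|w(0)|$ carries no factor $M_T$, because the trace identity is exact. That is why $C_2=B_P$ rather than $M_TB_P$.

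\textbf{Second bound.} For the estimate on $\PP_\varepsilon C(\cdot)\eta$, again $\|C(\cdot)\eta\|_{L^2(0,1)}\le M_C\|\eta\|$, and applying $\|\PP_\varepsilon\|\le M_P$ gives $C_3=M_PM_C$.

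\textbf{Main obstacle.} The only delicate step is the derivative piece. It rests entirely on Lemma~\ref{lem:Pux} together with the exact trace identity. The rest is bookkeeping of operator norms, with the care point being that the inverse-transformation constant $M_T$ is uniform in $\varepsilon$ because $\bar f$ is.
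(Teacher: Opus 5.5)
Your proposal is correct and follows the paper's proof essentially step for step. Like the paper, you split $\mathcal Gw=\PP_\varepsilon u_x+\PP_\varepsilon C\PP_\varepsilon u$, bound the derivative piece by Lemma~\ref{lem:Pux} with the exact trace identity \eqref{eq:traceidentity} (so $C_2=B_P$ carries no $M_T$), bound the composite piece and $\PP_\varepsilon C\eta$ by the operator norms $M_C$ and $M_P$, and convert $\|u\|\le M_T\|w\|$ via Lemma~\ref{lem:invert}; your re-derivation of the inverse-kernel bound is redundant with that lemma but harmless.
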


\begin{proof}
With $u=\TT^{-1}w$, \eqref{eq:Gdef} reads $\mathcal Gw=\PP_\varepsilon u_x+\PP_\varepsilon\,C(\cdot)\PP_\varepsilon u$. Lemma~\ref{lem:Pux}, applied at $h=u$, bounds the first term by $A_P\|u\|+B_P|u(0)|$. For the second, $|C(x)\phi|\le M_C\|\phi\|_\Hs$ gives $\|C(\cdot)\PP_\varepsilon u\|_{L^2(0,1)}\le M_CM_P\|u\|$ and hence $\|\PP_\varepsilon C\PP_\varepsilon u\|\le M_P^2M_C\|u\|$; similarly $\|\PP_\varepsilon C\eta\|\le M_PM_C\|\eta\|$. Finally $\|u\|\le M_T\|w\|$ by Lemma~\ref{lem:invert}, and $u(0)=w(0)$ by \eqref{eq:traceidentity}. Collecting constants yields \eqref{eq:C123}.
\end{proof}

\subsection{Lyapunov estimates and composite functional}
Take
\begin{equation}
 V_s=\frac12\int_0^1e^{\lambda x}\,w(x,t)^2\,dx,\qquad\lambda>0 .
 \label{eq:Vs}
\end{equation}

\begin{lemma}[Slow decay with trace dissipation]\label{lem:slow}
Along \eqref{eq:wtarget},
\begin{equation}
 \dot V_s\le-a_sV_s-\tfrac12|w(0,t)|^2+2\beta_1\sqrt{V_sV_f},
 \label{eq:Vsdot}
\end{equation}
with $a_s=\lambda$, $\beta_1$ from \eqref{eq:betas}, and $V_f$ from \eqref{eq:Vf} below.
\end{lemma}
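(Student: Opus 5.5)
The plan is a direct energy computation on the exact slow target \eqref{eq:wtarget}. I take $V_f$ in \eqref{eq:Vf} to be the $x$-weighted fast energy $V_f=\frac12\int_0^1e^{-\sigma x}\|\eta(x,\cdot,t)\|_\Hs^2\,dx$, the weight decaying in $x$ only, as the Remark indicates. That choice is an assumption, since \eqref{eq:Vf} is not yet stated. It is the choice whose constants reproduce $\beta_1$ in \eqref{eq:betas} exactly, via $e^{\sigma/2}$ from the fast weight and $e^{\lambda/2}$ from the slow weight. As in the rest of the section, I would carry out the computation on piecewise classical solutions and extend by density.

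\emph{Step 1: transport part.} Differentiating \eqref{eq:Vs} along \eqref{eq:wtarget} gives
\[
\dot V_s=\int_0^1e^{\lambda x}w\,w_x\,dx+\int_0^1e^{\lambda x}w\,\TT[C(\cdot)\eta]\,dx .
\]
Integrating the first integral by parts and using $w(1,t)=0$ gives
\[
\int_0^1e^{\lambda x}w\,w_x\,dx=\tfrac12\big[e^{\lambda x}w^2\big]_0^1-\tfrac{\lambda}{2}\int_0^1e^{\lambda x}w^2\,dx=-\tfrac12 w(0,t)^2-\lambda V_s .
\]
This produces both $-a_sV_s$ with $a_s=\lambda$ and the trace dissipation $-\frac12|w(0,t)|^2$. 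I would keep that trace term explicitly rather than discard it, because it later absorbs the $C_2|w(0)|$ contribution from Lemma~\ref{lem:Put}.

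\emph{Step 2: coupling term.} By Cauchy--Schwarz with the split $e^{\lambda x}=e^{\lambda x/2}\cdot e^{\lambda x/2}\le e^{\lambda x/2}e^{\lambda/2}$,
\[
\Big|\int_0^1e^{\lambda x}w\,\TT[C\eta]\,dx\Big|\le\sqrt{2V_s}\;e^{\lambda/2}\,\|\TT C(\cdot)\eta\| .
\]
Then $\|\TT\|_{\mathcal L(L^2)}\le1+\sup_T|k_\varepsilon|\le1+\bar k$. Next, $|C(x)\eta(x)|\le M_C\|\eta(x)\|_\Hs$ gives $\|C(\cdot)\eta\|\le M_C\|\eta\|$. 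Finally $\|\eta\|^2\le e^{\sigma}\int_0^1e^{-\sigma x}\|\eta\|_\Hs^2\,dx=2e^\sigma V_f$. Multiplying these bounds yields
\[
\sqrt{2V_s}\,e^{\lambda/2}(1+\bar k)M_C\sqrt{2e^{\sigma}V_f}=2\beta_1\sqrt{V_sV_f},
\]
which is \eqref{eq:Vsdot}.

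\emph{Main obstacle.} The only nonroutine input is the uniform kernel bound $|k_\varepsilon(x,\xi)|\le\bar k=e^{\bar f}-1$ on $T$, uniform in $\varepsilon$. I expect it to be contained in Lemma~\ref{lem:invert}, since $M_T=1+\bar k e^{\bar k}$ is built from it. If it is not stated there, I would prove it as follows. Integrate \eqref{eq:genq} along the characteristics $x-\xi=\text{const}$ from $\xi=0$, using $k_\varepsilon(x,0)=0$, to obtain a Volterra integral equation
\[
k=-\int f+\iint k\,f .
\]
Then solve it by successive approximations with $|f_\varepsilon|\le\bar f$ from \eqref{eq:Kgen}. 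The $n$-th iterate is bounded by $\bar f^n(x-\xi)^n/n!$, so the series sums to at most $e^{\bar f}-1$. The same iteration gives existence, uniqueness and continuity of $k_\varepsilon$ on $T$.
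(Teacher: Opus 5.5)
Your proposal is correct and matches the paper's proof step for step: one integration by parts using $w(1,t)=0$ yields $-\tfrac12|w(0,t)|^2-\lambda V_s$. The weighted Cauchy--Schwarz estimate with $\|\TT\|\le1+\bar k$, $\|C(\cdot)\eta\|\le M_C\|\eta\|$ and $\|\eta\|^2\le 2e^{\sigma}V_f$ then gives exactly $2\beta_1\sqrt{V_sV_f}$, and your guessed $V_f$ is precisely \eqref{eq:Vf}. The pointwise kernel bound you flag is Lemma~\ref{lem:kernel}, and $\|\TT\|\le1+\bar k$ is recorded in Lemma~\ref{lem:invert}. In your fallback sketch the iterates should be bounded by $\bar f^{\,n}x^n/n!$ rather than $\bar f^{\,n}(x-\xi)^n/n!$: the characteristic integration runs over length $\xi$, and $k_\varepsilon(x,x)=-\int_0^xf_\varepsilon(s,s)\,ds\neq0$ in general. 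The resulting bound $e^{\bar f}-1$ is unchanged.
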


\begin{proof}
Using \eqref{eq:wtarget}, $w(1,t)=0$, and one integration by parts,
\begin{align}
 \dot V_s
 &=\frac12\big[e^{\lambda x}w^2\big]_0^1-\frac\lambda2\int_0^1e^{\lambda x}w^2
 +\int_0^1e^{\lambda x}w\,\TT[C\eta]
 \nonumber\\
 &=-\frac12|w(0)|^2-\lambda V_s+\int_0^1e^{\lambda x}w\,\TT[C\eta] .
 \label{eq:Vsdotexact}
\end{align}
By Cauchy--Schwarz with the weight $e^{\lambda x}\le e^{\lambda}$, $\|\TT\|\le1+\bar k$, $\|C\eta\|_{L^2(0,1)}\le M_C\|\eta\|$, and $\|\eta\|^2\le e^{\sigma}\,2V_f$ from \eqref{eq:Vf},
\begin{equation}
 \left|\int_0^1e^{\lambda x}w\,\TT[C\eta]\right|
 \le\sqrt{2V_s}\;e^{\lambda/2}(1+\bar k)M_C\,\|\eta\|
 \le2(1+\bar k)M_C\,e^{(\lambda+\sigma)/2}\sqrt{V_sV_f},
 \label{eq:slowcross}
\end{equation}
which is \eqref{eq:Vsdot} with $\beta_1$ from \eqref{eq:betas}.
\end{proof}

Take
\begin{equation}
 V_f=\frac12\int_0^1\int_0^De^{-\sigma x}\,\eta(x,z,t)^2\,dz\,dx,\qquad\sigma>\sigma_0 .
 \label{eq:Vf}
\end{equation}

\begin{lemma}[Fast decay at rate $a_f/\delta$]\label{lem:fast}
Along \eqref{eq:etatarget} with \eqref{eq:etabc},
\begin{equation}
 \dot V_f
 \le-2\Big(\frac{a_f}{\delta}-\gamma\Big)V_f
 +2\beta_2\sqrt{V_sV_f}
 +\beta_0|w(0)|\sqrt{V_f},
 \label{eq:Vfdot}
\end{equation}
with $a_f,\gamma$ from \eqref{eq:asaf} and $\beta_0,\beta_2$ from \eqref{eq:betas}, uniformly in $\varepsilon\in[0,\bar\varepsilon]$.
\end{lemma}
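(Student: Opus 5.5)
We differentiate $V_f$ along \eqref{eq:etatarget}, multiply the fast equation by $e^{-\sigma x}\eta/\delta$, and integrate over the rectangle. This gives
\[
\dot V_f=\frac1\delta\int_0^1\!\!\int_0^De^{-\sigma x}\eta\big[-\Lambda\eta_x-\nu\eta_z+\varepsilon\eta_{zz}+d\eta+\Theta\eta\big]\,dz\,dx-\int\!\!\int e^{-\sigma x}\eta\big[\PP_\varepsilon C\eta+\mathcal Gw\big]\,dz\,dx .
\]
The first (generator) integral is treated by integration by parts, term by term.

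\emph{The $x$-transport term.} We have $-\int e^{-\sigma x}\Lambda\eta\eta_x=-\tfrac12[e^{-\sigma x}\Lambda\eta^2]_0^1+\tfrac12\int(e^{-\sigma x}\Lambda)'\eta^2$. The boundary term at $x=0$ vanishes by \eqref{eq:etabc}, and the term at $x=1$ is nonpositive and is discarded. The derivative $(e^{-\sigma x}\Lambda)'=e^{-\sigma x}(\Lambda'-\sigma\Lambda)$ contributes at most $-(\sigma\uL-\Lambda_1)$ times $V_f$.

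\emph{The $z$-transport and diffusion terms.} These are handled with the primal counterpart of \eqref{eq:adjdiss}:
\[
-\langle\nu\eta_z,\eta\rangle+\varepsilon\langle\eta_{zz},\eta\rangle=-\tfrac12\nu(x,D)\eta(D)^2-\tfrac12\nu(x,0)\eta(0)^2-\varepsilon\|\eta_z\|^2+\tfrac12\langle\nu_z\eta,\eta\rangle .
\]
To get this, the flux condition at $z=0$ gives $-\varepsilon\eta_z(0)\eta(0)=-\nu(0)\eta(0)^2$, which combines with $+\tfrac12\nu(0)\eta(0)^2$ from the transport term. The Neumann condition removes the trace at $z=D$. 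At $\varepsilon=0$ we use $\eta(x,0)=0$ instead. Every boundary term is then nonpositive, so only $\tfrac12\nu_1$ times $2V_f$ survives.

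\emph{The zero-order and nonlocal terms.} The zero-order term is bounded by $\bar d\cdot 2V_f$. For the nonlocal term, $\int e^{-\sigma x}\langle\eta,\Theta\eta\rangle\le\bar\theta\cdot 2V_f$ by Cauchy--Schwarz, since the weight depends only on $x$. This is where the $x$-only weight matters.

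Collecting the generator contributions,
\[
\frac1\delta\big(-(\sigma\uL-\Lambda_1-\nu_1-2\bar d-2\bar\theta)\big)V_f=-\frac{2a_f}{\delta}V_f,
\]
uniformly in $\varepsilon$, because the diffusion enters only dissipatively.

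\emph{The $\delta$-free perturbation terms.} These come from dividing the $\delta$-factor terms of \eqref{eq:etatarget} by $\delta$. For the self-loop, Cauchy--Schwarz with weight $e^{-\sigma x}\le1$ and Lemma~\ref{lem:Put} give
\[
\Big|\int e^{-\sigma x}\eta\,\PP_\varepsilon C\eta\Big|\le\|\eta\|\,C_3\|\eta\|\le C_3e^{\sigma}\,2V_f .
\]
This would give the constant $2C_3e^{\sigma}$. It is sharpened by keeping the weight split: the integral is at most $\|e^{-\sigma x/2}\eta\|\,\|\PP_\varepsilon C\eta\|\le\sqrt{2V_f}\,C_3\|\eta\|$, and $\|\eta\|\le e^{\sigma/2}\sqrt{2V_f}$. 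The result is $2C_3e^{\sigma/2}V_f=2\gamma V_f$.

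\emph{The interconnection term.} Similarly,
\[
\Big|\int e^{-\sigma x}\eta\,\mathcal Gw\Big|\le\sqrt{2V_f}\,(C_1\|w\|+C_2|w(0)|).
\]
Using $\|w\|^2\le2V_s$ (since $e^{\lambda x}\ge1$), this gives $2C_1\sqrt{V_sV_f}+\sqrt2C_2|w(0)|\sqrt{V_f}$, which matches $\beta_2=C_1$ and $\beta_0=\sqrt2C_2$.

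The only delicate points are the sign bookkeeping of the boundary traces in $z$ under the flux condition, uniformly down to $\varepsilon=0$, and the use of Lemma~\ref{lem:Put}. That lemma requires $u\in H^1$, which holds for the smooth compatible solutions on which the computation is performed before extending by density. The trace term $|w(0)|$ cannot be bounded by $V_s$ and is left explicit, to be absorbed later by the $-\tfrac12|w(0)|^2$ of Lemma~\ref{lem:slow}.
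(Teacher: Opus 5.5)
Your proposal is correct and follows essentially the same route as the paper. It uses the weighted energy identity with integration by parts in $x$ and $z$, and the flux and Neumann bookkeeping that makes every boundary trace nonpositive uniformly down to $\varepsilon=0$. It relies on the $x$-only weight for the nonlocal term, and applies Cauchy--Schwarz with the split weight $e^{-\sigma x}=e^{-\sigma x/2}e^{-\sigma x/2}$ together with Lemma~\ref{lem:Put}, which yields exactly $\gamma=C_3e^{\sigma/2}$, $\beta_2=C_1$, and $\beta_0=\sqrt2\,C_2$.
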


\begin{proof}
Multiply \eqref{eq:etatarget} by $e^{-\sigma x}\eta$ and integrate over the rectangle. The two transport terms are integrated by parts:
\begin{align}
 -\frac12\int_0^D\!\int_0^1e^{-\sigma x}\Lambda\,(\eta^2)_x\,dx\,dz
 &=-\frac12\int_0^De^{-\sigma}\Lambda(1)\,\eta(1,z)^2\,dz
 \nonumber\\
 &\quad+\frac12\int_0^1\!\!\int_0^De^{-\sigma x}\big(\Lambda'-\sigma\Lambda\big)\eta^2,
 \label{eq:xIBP}\\
 -\frac12\int_0^1e^{-\sigma x}\!\int_0^D\nu\,(\eta^2)_z\,dz\,dx
 &=-\frac12\int_0^1e^{-\sigma x}\nu(x,D)\,\eta(x,D)^2\,dx
 \nonumber\\
 &\quad+\frac12\int_0^1e^{-\sigma x}\nu(x,0)\,\eta(x,0)^2\,dx
 \nonumber\\
 &\quad+\frac12\int_0^1\!\!\int_0^De^{-\sigma x}\nu_z\,\eta^2,
 \label{eq:zIBP}
\end{align}
the inflow contribution at $x=0$ vanishing by \eqref{eq:etabc}. The diffusion term contributes
\begin{equation}
 \varepsilon\int_0^1\!\!\int_0^De^{-\sigma x}\eta\,\eta_{zz}\,dz\,dx
 =\int_0^1e^{-\sigma x}\Big[\varepsilon\eta\eta_z\Big]_{z=0}^{z=D}dx
 -\varepsilon\int_0^1e^{-\sigma x}\|\eta_z(x,\cdot)\|_\Hs^2\,dx,
 \label{eq:epsfastdiss}
\end{equation}
in which the trace at $z=D$ vanishes (Neumann) and the trace at $z=0$ equals $-\nu(x,0)\eta(x,0)^2$ (zero-flux condition); at $\varepsilon=0$, \eqref{eq:epsfastdiss} is absent and $\eta(x,0)=0$. In either case the net boundary contribution at $z=0$ is $-\tfrac12\int e^{-\sigma x}\nu(x,0)\eta(x,0)^2\,dx\le0$; the contributions at $z=D$ and $x=1$ are nonpositive; and all of them, together with the dissipation $-\varepsilon\int e^{-\sigma x}\|\eta_z\|^2$, are discarded. Pointwise, $\Lambda'-\sigma\Lambda+\nu_z\le\Lambda_1+\nu_1-\sigma\uL$. The zero-order term contributes at most $\bar d\cdot2V_f$, and the nonlocal term, by Cauchy--Schwarz at each fixed $x$ with the Hilbert--Schmidt bound $\bar\theta$ (the weight depends on $x$ only, so it factors out of the $\zeta$ integration), contributes at most $\bar\theta\cdot2V_f$. Hence
\begin{equation}
 \delta\dot V_f
 \le-\big(\sigma\uL-\Lambda_1-\nu_1-2\bar d-2\bar\theta\big)V_f
 -\delta\int_0^1\!\!\int_0^De^{-\sigma x}\eta\,\big(\PP_\varepsilon C\eta+\mathcal Gw\big)\,dz\,dx
 =-2a_fV_f-\delta\,\Xi,
 \label{eq:Vfdotexact}
\end{equation}
with $a_f$ from \eqref{eq:asaf}. For the coupling term, Cauchy--Schwarz with $e^{-\sigma x}\le1$ and Lemma~\ref{lem:Put} give
\begin{equation}
 |\Xi|
 \le\sqrt{2V_f}\,\big(\|\mathcal Gw\|+\|\PP_\varepsilon C\eta\|\big)
 \le\sqrt{2V_f}\Big(C_1\sqrt{2V_s}+C_2|w(0)|+C_3e^{\sigma/2}\sqrt{2V_f}\Big),
 \label{eq:Xibound}
\end{equation}
where $\|w\|\le\sqrt{2V_s}$ (the weight $e^{\lambda x}\ge1$) and $\|\eta\|\le e^{\sigma/2}\sqrt{2V_f}$ were used. Dividing \eqref{eq:Vfdotexact} by $\delta$ and inserting \eqref{eq:Xibound} yields \eqref{eq:Vfdot} with $\gamma=C_3e^{\sigma/2}$, $\beta_2=C_1$, $\beta_0=\sqrt2\,C_2$.
\end{proof}

Take
\begin{equation}
 V=V_s+\kappa V_f,\qquad\kappa>0 .
 \label{eq:Vcomp}
\end{equation}
Adding \eqref{eq:Vsdot} and $\kappa\times$\eqref{eq:Vfdot},
\begin{equation}
 \dot V
 \le-a_sV_s-\tfrac12|w(0)|^2
 -2\kappa\Big(\frac{a_f}{\delta}-\gamma\Big)V_f
 +2(\beta_1+\kappa\beta_2)\sqrt{V_sV_f}
 +\kappa\beta_0|w(0)|\sqrt{V_f}.
 \label{eq:Vdotpre}
\end{equation}
Young's inequality eliminates the trace:
\begin{equation}
 -\tfrac12|w(0)|^2+\kappa\beta_0|w(0)|\sqrt{V_f}
 \le\frac{\kappa^2\beta_0^2}{2}\,V_f .
 \label{eq:tracecomplete}
\end{equation}
The right side of \eqref{eq:Vdotpre} is then a quadratic form in $\big(\sqrt{V_s},\sqrt{V_f}\big)$, negative definite if and only if
\begin{equation}
 \frac{a_f}{\delta}-\gamma
 >\frac{(\beta_1+\kappa\beta_2)^2}{2\kappa a_s}
 +\frac{\kappa\beta_0^2}{4}
 =\frac{\beta_1^2}{2a_s}\,\frac1\kappa
 +\frac{\beta_1\beta_2}{a_s}
 +\left(\frac{\beta_2^2}{2a_s}+\frac{\beta_0^2}{4}\right)\kappa .
 \label{eq:kappacond}
\end{equation}
The right side is of the form $A/\kappa+B\kappa+\beta_1\beta_2/a_s$ with $A=\beta_1^2/(2a_s)$ and $B=\big(\beta_2^2+a_s\beta_0^2/2\big)/(2a_s)$, minimized at $\kappa=\sqrt{A/B}$ with minimum $2\sqrt{AB}+\beta_1\beta_2/a_s$. With
\begin{equation}
 \widehat\beta_2:=\sqrt{\beta_2^2+\frac{a_s\beta_0^2}{2}},
 \label{eq:betahat}
\end{equation}
the optimal weight and the minimized threshold are
\begin{equation}
 \kappa^\star=\frac{\beta_1}{\widehat\beta_2},
 \qquad
 \min_{\kappa>0}\left[\frac{(\beta_1+\kappa\beta_2)^2}{2\kappa a_s}
 +\frac{\kappa\beta_0^2}{4}\right]
 =\frac{\beta_1}{a_s}\big(\beta_2+\widehat\beta_2\big),
 \label{eq:kappastar}
\end{equation}
so that \eqref{eq:kappacond} at $\kappa=\kappa^\star$ is exactly $\delta<\delta^\star(\lambda,\sigma)$ with $\delta^\star$ from \eqref{eq:deltastar}.

Fix $\lambda,\sigma$ and $\delta<\delta^\star(\lambda,\sigma)$, and set $\kappa=\kappa^\star$, $b_\star:=\beta_1+\kappa^\star\beta_2$, and
\begin{equation}
 m:=2\kappa^\star\Big(\frac{a_f}{\delta}-\gamma\Big)-\frac{(\kappa^\star)^2\beta_0^2}{2}>0,
 \qquad
 a_sm>b_\star^2,
 \label{eq:margin}
\end{equation}
the last inequality being \eqref{eq:kappacond}. The quadratic form $-a_sX^2+2b_\star XY-mY^2$ is then bounded above by $-\mu(X^2+Y^2)$ with
\begin{equation}
 \mu=\frac12\Big(a_s+m-\sqrt{(a_s-m)^2+4b_\star^2}\Big)>0,
 \label{eq:mu}
\end{equation}
so \eqref{eq:Vdotpre}, \eqref{eq:tracecomplete} give
\begin{equation}
 \dot V\le-\mu\,(V_s+V_f)\le-\alpha_0V,
 \qquad
 \alpha_0:=\mu\min\{1,1/\kappa^\star\},
 \label{eq:Vdecay}
\end{equation}
hence $V(t)\le e^{-\alpha_0t}V(0)$.

Finally, $V$ is equivalent to the physical energy. From $w=\TT u$, $u=\TT^{-1}w$, $\eta=v-\PP_\varepsilon u$, and Lemmas \ref{lem:evolution} and \ref{lem:invert},
\begin{align}
 \|w\|&\le(1+\bar k)\|u\|,\qquad
 \|u\|\le M_T\|w\|,
 \nonumber\\
 \|\eta\|&\le\|v\|+M_P\|u\|,\qquad
 \|v\|\le\|\eta\|+M_PM_T\|w\|,
 \label{eq:equiv1}
\end{align}
while $1\le e^{\lambda x}\le e^{\lambda}$ and $e^{-\sigma}\le e^{-\sigma x}\le1$ bound the weights. Consequently there exist $m_1,m_2>0$, explicit in $\bar k,M_T,M_P,\lambda,\sigma,\kappa^\star$, with
\begin{equation}
 m_1\big(\|u\|^2+\|v\|^2\big)\le V\le m_2\big(\|u\|^2+\|v\|^2\big),
 \label{eq:equiv2}
\end{equation}
and \eqref{eq:Vdecay} gives \eqref{eq:mainestimate} with $M=m_2/m_1$ and $\alpha=\alpha_0$. Since every constant depends on $\varepsilon$ only through $\bar\varepsilon$, and the controller does not depend on $(\lambda,\sigma)$, the conclusion holds for every $\varepsilon\in[0,\bar\varepsilon]$ and every $\delta<\Delta$. \hfill$\square$

\section{Example: Constant-Coefficient Counterconvection}\label{sec:example}
Take $\Lambda\equiv1$, $\nu\equiv1$, $d\equiv0$, $\theta\equiv0$, $b=b(z)$, $c=c(z)$, $D>1$, and $\bar\varepsilon=0$, the setting of Corollary~\ref{cor:transport}. Then $\rho\equiv1$, characteristics are the lines $z-x=\mathrm{const}$, and
\begin{equation}
 (\PP_0 u)(x,z)=\int_{\max\{0,\,x-z\}}^{x}b(z-x+\xi)\,u(\xi)\,d\xi,
 \label{eq:Pexplicit}
\end{equation}
so the reduced kernel is the convolution
\begin{equation}
 f_0(x,\xi)=f(x-\xi),
 \qquad
 f(r)=\int_r^Dc(z)\,b(z-r)\,dz ,
 \label{eq:kconv}
\end{equation}
where $D>1$ keeps the corner characteristic inside the rectangle, so that $f\in C^1([0,1])$. For $D<1$, the kernel \eqref{eq:kconv} satisfies $f(r)=0$ for $r>D$, so in the reduced model $u_t(x,t)=u_x(x,t)+\int_{x-D}^xf(x-\xi)u(\xi,t)\,d\xi$: the state $u(\xi,t)$ with $\xi<x-D$ does not enter. No fast subsystem that is a spatial ODE produces such an $f$, and neither does an ensemble ($\nu=0$, $\varepsilon=0$): their kernels are sums or integrals of exponentials in $r$, analytic, and an analytic function vanishing for $r>D$ vanishes for all $r$. Volterra kernels vanishing for $r>D$ in the class of \cite{KrsticSmyshlyaev2008} thus arise from transport in $z$. The general constants specialize to
\begin{equation}
 \rho_1=0,\quad M_R=0,\quad M_\Phi=1,\quad
 M_P=\sqrt D\,B_0,\quad
 N_\Psi=\sqrt2,\quad
 \sigma_0=0,
 \label{eq:exconst1}
\end{equation}
\begin{equation}
 A_P=\sqrt D\,(B_0+B_z)+\sqrt2\,B_0,
 \qquad
 B_P=\sqrt D\,B_0 .
 \label{eq:exconst2}
\end{equation}
The mechanism of Lemma~\ref{lem:Pux} is visible in the elementary computation: for $0<z\le x$, integration by parts in \eqref{eq:Pexplicit} gives
\begin{equation}
 (\PP_0 u_x)(x,z)=b(z)u(x)-b(0)u(x-z)-\int_{x-z}^xb'(z-x+\xi)u(\xi)\,d\xi,
 \label{eq:excase1}
\end{equation}
the middle term being the jump of the kernel across the characteristic $z=x-\xi$, while for $x<z<D$
\begin{equation}
 (\PP_0 u_x)(x,z)=b(z)u(x)-b(z-x)u(0)-\int_0^xb'(z-x+\xi)u(\xi)\,d\xi,
 \label{eq:excase2}
\end{equation}
the middle term being the trace contribution that becomes $B_P|u(0)|$. A direct integration by parts on the kernel, available at $\varepsilon=0$ where the jump is explicit, yields from these formulas the slightly sharper $A_P=(\sqrt D+1)B_0+\sqrt DB_z$; the proof of Theorem~\ref{thm:main} carries the additional factor $\sqrt2$ on the jump contribution because it never differentiates the kernel. The physical picture is a bulk species $u$ convecting toward decreasing $x$, and a fast secondary species $v$ counterconvecting toward increasing $x$ while progressing through the internal coordinate $z$, injected from $u$ with intensity $b(z)$ and fed back through the aggregate $\int_0^Dc(z)v\,dz$. The apparent self-feeding of $u$ from ahead in the reduced model \eqref{eq:genreslow} is mediated by this rapidly counterconvecting species, whose unresolved dynamics appear to $u$, in the limit $\delta\to0$, exactly as the spatial Volterra operator with kernel \eqref{eq:kconv}. Theorem~\ref{thm:main} certifies the range $\delta<\Delta$ over which the backstepping feedback designed for \eqref{eq:kconv} stabilizes the full two-dimensional plant.

\section{Concluding Remarks}
The architecture parallels the singular-perturbation analysis of the thermal convection loop \cite{VazquezKrstic2006}: introduce the error between the fast state and its quasi-steady value, derive a stable boundary-layer subsystem, combine its Lyapunov functional with that of the reduced slow target, and use smallness of the singular perturbation parameter to dominate the cross couplings. The present result differs in three respects. The fast subsystem is a two-dimensional PDE on a rectangle rather than a one-dimensional fast PDE; the relative weight of the two energies is chosen analytically, in \eqref{eq:kappastar}, to maximize the admissible $\delta$ within the Lyapunov estimates rather than being fixed a priori; and the pure-transport and transport-diffusion fast subsystems fall under one theorem with one proof, because the central estimate never differentiates the kernel of the quasi-steady solution operator.

In the ensembles of \cite{AlleaumeKrstic2025}, there is no transport or diffusion in the ensemble variable, an exact design exists, and the difficulty is concentrated in the kernel equations of the full coupled system. Here, no scalar input can be expected to control the two-dimensional fast state exactly, and no design of that kind exists; the feedback is the scalar design of \cite{KrsticSmyshlyaev2008} on the reduced model, and the difficulty is concentrated in the analysis, in the motion of the quasi-steady state and in the absorption of the spatial derivative in Lemma~\ref{lem:Pux}. The time-scale separation is what makes stabilization available at all: what no scalar input achieves exactly, wide separation achieves exponentially, for every $\delta$ below the bound of Theorem~\ref{thm:main}.

An open extension is the boundary coupling $v(0,z,t)=b_0(z)u(0,t)$ of \cite[eq.~(4)]{AlleaumeKrstic2025}. The reduced model then acquires a $u(0,t)$ term of the class of \cite{KrsticSmyshlyaev2008} and the design carries over; but the motion of the quasi-steady state places the first-order trace $u_t(0,t)$ in the fast error equation, which the $L^2$ analysis does not control, so the extension requires an $H^1$ analysis or smallness of $b_0$.

\appendix
\section*{Appendix}
\addcontentsline{toc}{section}{Appendix}
The following lemmas certify the assertions of Sections~\ref{sec:design} and \ref{sec:main} and the well-posedness claim of Theorem~\ref{thm:main}.

We first record the characteristics of the transport part, used in the $\varepsilon=0$ construction. For $0\le\xi\le x\le1$ and $z\in[0,D]$, let $s\mapsto Z(s;x,z)$ denote the characteristic of $\rho$ through $(x,z)$,
\begin{equation}
 \frac{dZ}{ds}=\rho(s,Z),\qquad Z(x;x,z)=z,
 \label{eq:charODE}
\end{equation}
solved backward from $s=x$ until either $s=\xi$ is reached or the characteristic exits through $z=0$; in the latter case $\xi^*(x,z)$ denotes the exit abscissa, $Z(\xi^*;x,z)=0$. Since $\rho\ge\rho_0:=\un/\bar\Lambda>0$, characteristics are strictly increasing in $s$ and the exit abscissa, when it exists, is unique. Differentiating \eqref{eq:charODE} with respect to the datum $z$ gives the variational equation $\tfrac{d}{ds}\partial_zZ=\rho_z(s,Z)\partial_zZ$ with $\partial_zZ(x;x,z)=1$, hence
\begin{equation}
 \partial_z Z(s;x,z)
 =\exp\!\Big(-\!\int_s^x\rho_z(\tau,Z(\tau;x,z))\,d\tau\Big)
 \in\big[e^{-\rho_1(x-s)},\,e^{\rho_1(x-s)}\big].
 \label{eq:charJac}
\end{equation}

\begin{lemma}[Spatial evolution family; continuity of $f_\varepsilon$]\label{lem:evolution}
Under Assumption~\ref{as:coeff}, for every $\varepsilon\in[0,\bar\varepsilon]$ the homogeneous problem $V_x=\mathcal A_\varepsilon(x)V$, $V(\xi)=\phi$, is well posed on $\xi\le x\le1$ for every $\phi\in\Hs$, and its solution operators $\Phi_\varepsilon(x,\xi)\in\mathcal L(\Hs)$ form an evolution family, with $\Phi_\varepsilon(x,x)=I$, $\Phi_\varepsilon(x,s)\Phi_\varepsilon(s,\xi)=\Phi_\varepsilon(x,\xi)$, and $\|\Phi_\varepsilon(x,\xi)\|_{\mathcal L(\Hs)}\le M_\Phi$, uniformly in $\varepsilon$. Moreover $\PP_\varepsilon$ from \eqref{eq:Pgen} is the unique mild solution map of \eqref{eq:Hspatial}, $\|\PP_\varepsilon\|\le M_P$ as an operator from $L^2(0,1)$ to $L^2_x(\Hs)$, uniformly in $\varepsilon$; $\PP_\varepsilon u$ satisfies the homogeneous conditions \eqref{eq:gen-zbc}, and $(\PP_\varepsilon u)(0,z)=0$; and $f_\varepsilon$ from \eqref{eq:Kgen} is continuous on $T$.
\end{lemma}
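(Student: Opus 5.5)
The plan is to prove Lemma~\ref{lem:evolution} by an energy estimate on the homogeneous spatial evolution $V_x=\mathcal A_\varepsilon(x)V$. Existence comes from standard nonautonomous theory, and the two regimes are handled separately.

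\emph{Case $\varepsilon>0$.} For each $x$, $\mathcal A_{\varepsilon,0}(x)$ is a uniformly elliptic operator in $z$ with the Robin/Neumann conditions \eqref{eq:gen-zbc}. Its form domain $H^1(0,D)$ does not depend on $x$, and its coefficients are $C^1$ in $x$. The Lions/Kato theory of nonautonomous forms therefore gives an evolution family on $\Hs$. The bounded perturbation $\mathcal R(x)$ is then added by the Dyson--Phillips (variation-of-constants) series.

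\emph{Case $\varepsilon=0$.} Here I construct $\Phi_0(x,\xi)\phi$ explicitly along the characteristics \eqref{eq:charODE}. Take $V(x,z)=\phi(Z(\xi;x,z))\,\omega$ when the characteristic reaches $s=\xi$ inside the rectangle, and $0$ when it exits through $z=0$ first. The factor $\omega$ is the exponential of the integral of $d/\Lambda$ along the characteristic. The nonlocal $\theta$-term is again added by a Dyson series. The evolution-family properties $\Phi(x,x)=I$ and $\Phi(x,s)\Phi(s,\xi)=\Phi(x,\xi)$ follow from uniqueness of mild solutions.

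\emph{Uniform bound.} For the bound $M_\Phi$ I differentiate $\tfrac12\|V(x)\|_\Hs^2$ in $x$. This is the primal computation dual to \eqref{eq:adjdiss}: after integrating by parts in $z$, the $\nu$-boundary term at $z=0$ combines with the flux condition into $-\tfrac12\nu(x,0)V(0)^2$. The term at $z=D$ is $-\tfrac12\nu V(D)^2\le0$, and the diffusion gives $-\varepsilon\|V_z\|^2$. What remains is $\tfrac12\langle\nu_zV,V\rangle/\Lambda+\langle\mathcal RV,V\rangle$. I would like this to be $\le(\rho_1/2+M_R)\|V\|^2$, which gives $M_\Phi=e^{\rho_1/2+M_R}$ by Gronwall.

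The bound on $\nu_z/\Lambda$ is the point where the constant has to be checked. The natural bound is $\nu_1/\uL$, which is not the same quantity as $\rho_1=\|\rho_z\|_\infty$. However $\rho_z=\nu_z/\Lambda$ exactly, since $\Lambda$ does not depend on $z$, so the $\rho_1$ of \eqref{eq:speedconst} does match. At $\varepsilon=0$ I redo the same computation on smooth data with the Dirichlet inflow condition, or alternatively use the Jacobian bound \eqref{eq:charJac}. All of these estimates are independent of $\varepsilon$ because the diffusion enters only dissipatively.

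\emph{Solution map $\PP_\varepsilon$.} Duhamel's formula gives \eqref{eq:Pgen} as the unique mild solution of \eqref{eq:Hspatial}. The pointwise bound $\|(\PP_\varepsilon u)(x)\|\le M_\Phi\sqrt D B_0\uL^{-1}\int_0^x|u|$ follows from $\|B(\xi)\|\le\sqrt D B_0/\uL$, and Cauchy--Schwarz then yields $\|\PP_\varepsilon\|\le M_P$. The homogeneous boundary conditions \eqref{eq:gen-zbc} are inherited from the domain of $\mathcal A_\varepsilon$, and $(\PP_\varepsilon u)(0)=0$ is immediate.

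\emph{Continuity of $f_\varepsilon$ (main obstacle).} I would first show that $(x,\xi)\mapsto\Phi_\varepsilon(x,\xi)B(\xi)$ is continuous into $\Hs$ in norm. This is enough, because $C(x)$ is a bounded functional that is continuous in $x$, since $c$ is continuous.

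For $\varepsilon>0$ this continuity follows from the strong continuity of the evolution family, together with the fact that $B(\xi)$ depends continuously on $\xi$ in $\Hs$. Writing $\Phi(x',\xi')B(\xi')-\Phi(x,\xi)B(\xi)$ as a telescoping sum and using the uniform bound $M_\Phi$ handles the perturbations.

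For $\varepsilon=0$, $\Phi_0(x,\xi)B(\xi)$ has a jump in $z$ along the characteristic through $(\xi,0)$. The jump has bounded height, at most $B_0$ times the exponential weight. Its location moves continuously in $(x,\xi)$ by continuous dependence of the characteristics, and it is transversal by \eqref{eq:charJac}. Away from the jump the profile is uniformly continuous. Hence the $L^2(0,D)$ distance between profiles is bounded by the sup of the smooth-part variation plus (jump height)$\times\sqrt{|\text{shift of the jump}|}$, and this tends to zero. The nonlocal Dyson terms are integrals of such continuous $\Hs$-valued maps, so they are continuous as well, and continuity of $f_\varepsilon$ on $T$ follows.
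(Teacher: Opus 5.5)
Your proposal is correct and follows the paper's proof essentially step for step. The shared steps are: Lions' theory of nonautonomous forms on the $x$-independent form domain $H^1(0,D)$ for $\varepsilon>0$; the explicit characteristic family plus a Duhamel/Dyson series for $\varepsilon=0$; the $\varepsilon$-independent dissipativity estimate (using $\rho_z=\nu_z/\Lambda$) with Gronwall for $M_\Phi$; Duhamel and Cauchy--Schwarz for $M_P$; and a telescoping decomposition with strong continuity of the family for the continuity of $f_\varepsilon$. The differences are cosmetic: you absorb $d$ into the characteristic weight $\omega$ and add the remaining bounded terms by a Dyson series, where the paper puts them into the form or the Duhamel equation, and your jump-shift estimate makes explicit the paper's remark that $L^2$-continuity of composition with the characteristic flow is unaffected by the moving cut-off.
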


\begin{proof}
The uniform bound is a dissipativity computation. With $\phi$ in the domain of $\mathcal A_\varepsilon(x)$, for $\varepsilon>0$,
\begin{equation}
 \varepsilon\langle\phi,\phi''\rangle_\Hs
 =\varepsilon\phi(D)\phi'(D)-\varepsilon\phi(0)\phi'(0)-\varepsilon\|\phi'\|_\Hs^2
 =-\nu(x,0)\phi(0)^2-\varepsilon\|\phi'\|_\Hs^2,
 \label{eq:eps-diss1}
\end{equation}
\begin{equation}
 -\langle\phi,\nu\phi'\rangle_\Hs
 =-\tfrac12\nu(x,D)\phi(D)^2+\tfrac12\nu(x,0)\phi(0)^2
 +\tfrac12\langle\nu_z\phi,\phi\rangle_\Hs,
 \label{eq:eps-diss2}
\end{equation}
so the boundary contributions at $z=0$ sum to $-\tfrac12\nu(x,0)\phi(0)^2\le0$ and those at $z=D$ to $-\tfrac12\nu(x,D)\phi(D)^2\le0$; for $\varepsilon=0$, \eqref{eq:eps-diss1} is absent, $\phi(0)=0$, and \eqref{eq:eps-diss2} alone gives the same conclusion. Since $\nu_z/\Lambda=\rho_z$,
\begin{equation}
 \langle\phi,\mathcal A_\varepsilon(x)\phi\rangle_\Hs
 \le\Big(\frac{\rho_1}{2}+M_R\Big)\|\phi\|_\Hs^2,
 \label{eq:eps-diss3}
\end{equation}
and Gronwall applied to $\frac{d}{dx}\|V\|_\Hs\le(\rho_1/2+M_R)\|V\|_\Hs+\|B(x)\|_\Hs|u(x)|$ gives \eqref{eq:Phibound} and the variation-of-constants formula \eqref{eq:Pgen} on classical solutions, extended by density.

Existence of the family: for each fixed $\varepsilon>0$ it follows from Lions' theory of nonautonomous forms. The domain of $\mathcal A_\varepsilon(x)$ varies with $x$ through the Robin coefficient $\nu(x,0)$, but the form domain does not: the associated form on $H^1(0,D)\times H^1(0,D)$,
\begin{equation}
 a_\varepsilon(x;\phi,\psi)
 =\frac{1}{\Lambda(x)}
 \Big[\varepsilon\langle\phi',\psi'\rangle_\Hs
 +\nu(x,0)\phi(0)\psi(0)
 +\langle\nu\phi',\psi\rangle_\Hs
 -\langle d\phi,\psi\rangle_\Hs
 -\langle\Theta\phi,\psi\rangle_\Hs\Big],
 \label{eq:form}
\end{equation}
is bounded and satisfies a G\aa rding inequality on the fixed space $H^1(0,D)$, uniformly in $x$, with continuous dependence on $x$ (measurable and bounded suffices for Lions' construction), so the family exists with solutions in $C([\xi,1];\Hs)$, which gives strong continuity; the $\varepsilon$-uniform bound \eqref{eq:Phibound} then comes from \eqref{eq:eps-diss1}--\eqref{eq:eps-diss3} and not from the generation theorem, whose constants may degenerate as $\varepsilon\downarrow0$. For $\varepsilon=0$ the zero-inflow transport family is explicit,
\begin{equation}
 \big(\Phi_{0,0}(x,\xi)\phi\big)(z)
 =\begin{cases}
 \phi\big(Z(\xi;x,z)\big), & \text{if the characteristic reaches }s=\xi\text{ in }z>0,\\
 0, & \text{if it exits through }z=0\text{ at some }\xi^*\in(\xi,x],
 \end{cases}
 \label{eq:Phi0def}
\end{equation}
with the evolution property from uniqueness of characteristics; substituting $\zeta=Z(\xi;x,z)$ on the survival set and using \eqref{eq:charJac},
\begin{equation}
 \|\Phi_{0,0}(x,\xi)\phi\|_\Hs^2
 =\int\phi(\zeta)^2\,\frac{dz}{d\zeta}\,d\zeta
 \le e^{\rho_1(x-\xi)}\|\phi\|_\Hs^2,
 \label{eq:Phi0bound}
\end{equation}
and the full family at $\varepsilon=0$ is the unique solution of the Volterra operator equation
\begin{equation}
 \Phi_0(x,\xi)=\Phi_{0,0}(x,\xi)+\int_\xi^x\Phi_{0,0}(x,s)\mathcal R(s)\Phi_0(s,\xi)\,ds,
 \label{eq:PhiDuhamel}
\end{equation}
obtained by successive approximation, uniformly convergent in operator norm on $0\le\xi\le x\le1$.

Finally, since $\|B(\xi)\|_\Hs\le\sqrt D\,B_0/\uL$, the kernel $G_\varepsilon(x,\xi):=\Phi_\varepsilon(x,\xi)B(\xi)$ obeys
\begin{equation}
 \|G_\varepsilon(x,\xi)\|_\Hs\le M_\Phi\frac{\sqrt D\,B_0}{\uL}=M_P,
 \label{eq:MG}
\end{equation}
so $\|(\PP_\varepsilon u)(x)\|_\Hs\le M_P\int_0^x|u|\le M_P\|u\|$ and $\|\PP_\varepsilon u\|_{L^2_x(\Hs)}\le M_P\|u\|$. The boundary and initial conditions hold by construction of the mild solution.

Finally, continuity of $f_\varepsilon$: writing
\begin{equation}
 \Phi_\varepsilon(x,\xi)B(\xi)-\Phi_\varepsilon(x',\xi')B(\xi')
 =\Phi_\varepsilon(x,\xi)\big[B(\xi)-B(\xi')\big]
 +\big[\Phi_\varepsilon(x,\xi)-\Phi_\varepsilon(x',\xi')\big]B(\xi'),
 \label{eq:fcont}
\end{equation}
the first term is $O(\|B(\xi)-B(\xi')\|_\Hs)$ by \eqref{eq:Phibound}, with $\xi\mapsto B(\xi)$ continuous into $\Hs$ since $b\in C^1$, and the second tends to zero because the family, applied to the fixed vector $B(\xi')$, is jointly strongly continuous---for $\varepsilon=0$ this is $L^2$-continuity of composition with the characteristic flow \eqref{eq:Phi0def}, unaffected by the moving cut-off, and for $\varepsilon>0$ it is the $C([\xi,1];\Hs)$ regularity above. Pairing \eqref{eq:fcont} with $c(x,\cdot)$, continuous into $\Hs$, gives continuity of $f_\varepsilon$ on $T$; the jump of the $\Hs$-valued kernel in $z$ is integrated out by $C(x)$.
\end{proof}

\begin{lemma}[Solvability of kernel equation]\label{lem:kernel}
For every $\varepsilon\in[0,\bar\varepsilon]$, the kernel equation \eqref{eq:genq} has a unique solution $k_\varepsilon\in C(T)$, continuously differentiable along the characteristic direction $(1,1)$, with $|k_\varepsilon(x,\xi)|\le\bar k$ on $T$.
\end{lemma}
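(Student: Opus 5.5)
We prove Lemma~\ref{lem:kernel} with the standard method of characteristics plus successive approximations, as for the Volterra class of \cite{KrsticSmyshlyaev2008}. The kernel equation \eqref{eq:genq} has characteristic direction $(1,1)$, and every point of $T$ is reached from the boundary $\xi=0$, where $k_\varepsilon(x,0)=0$, along the line $(x-\xi+s,s)$, $s\in[0,\xi]$. Integrating along this line turns \eqref{eq:genq} into the integral equation
\begin{equation*}
 k_\varepsilon(x,\xi)=-\int_0^\xi f_\varepsilon(x-\xi+s,s)\,ds
 +\int_0^\xi\int_s^{x-\xi+s}k_\varepsilon(x-\xi+s,r)f_\varepsilon(r,s)\,dr\,ds .
\end{equation*}
Conversely, any continuous solution of this equation is differentiable along $(1,1)$, satisfies \eqref{eq:genq} pointwise, and vanishes at $\xi=0$. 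So it suffices to solve the integral equation in $C(T)$. By Lemma~\ref{lem:evolution}, $f_\varepsilon$ is continuous on $T$ with $|f_\varepsilon|\le\bar f$, so every integrand above is continuous.

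We solve by iteration. Set $k^0(x,\xi):=-\int_0^\xi f_\varepsilon(x-\xi+s,s)\,ds$ and $k^{n+1}:=k^0+\mathcal F[k^n]$, where $\mathcal F$ is the double-integral operator. Equivalently, write $k=\sum_n\Delta k^n$ with $\Delta k^0=k^0$ and $\Delta k^{n+1}=\mathcal F[\Delta k^n]$. Each $\mathcal F$ maps $C(T)$ into $C(T)$, because $f_\varepsilon$ is uniformly continuous on the compact set $T$ and the limits of integration depend continuously on $(x,\xi)$.

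The main step is an induction bound that makes the series converge. The outer integral has length $\xi\le x$ and the inner one length $x-\xi\le x$, so we aim for
\begin{equation*}
 |\Delta k^n(x,\xi)|\le \frac{\bar f^{\,n+1}x^{n+1}}{(n+1)!}.
\end{equation*}
The case $n=0$ is immediate. For the inductive step, the factor $(x-\xi+s)^{n+1}$ is bounded by $x^{n+1}$, and the inner integral has length at most $x-\xi$. The key is to track the integration variables so that the length factors combine into $x^{n+2}/(n+2)!$. One can parametrize by $\tau=x-\xi+s$ running up to $x$ and use $r\le\tau$, so the double integral is dominated by $\int_0^x\tau^{n+1}\,d\tau/(n+1)!$, which gives the factorial. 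This bookkeeping is the one place where the argument could go wrong. If it fails, the fallback is the weaker, non-factorial bound $\bar f^{n+1}/n!$, which still gives convergence but would change the constant.

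With the factorial bound, the series converges uniformly on $T$ to a continuous $k_\varepsilon$ with
\begin{equation*}
 |k_\varepsilon|\le\sum_{n\ge0}\frac{\bar f^{\,n+1}}{(n+1)!}=e^{\bar f}-1=\bar k,
\end{equation*}
which is exactly \eqref{eq:kbarMT}. The bound is independent of $\varepsilon$ because $\bar f=M_CM_P$ is.

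For uniqueness, the difference of two continuous solutions satisfies the homogeneous equation $\tilde k=\mathcal F[\tilde k]$. The same induction gives $|\tilde k|\le\|\tilde k\|_\infty\bar f^{\,n}/n!$ for every $n$, so $\tilde k=0$.
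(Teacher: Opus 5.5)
Your proposal is correct and is essentially the paper's proof: the same integral equation along the characteristics $x-\xi=\mathrm{const}$, the same successive approximations with the inductive bound $\bar f^{\,n+1}x^{n+1}/(n+1)!$ summing to $e^{\bar f}-1=\bar k$, and uniqueness from the same estimate applied to the difference of two solutions. The bookkeeping you flagged goes through exactly as in the paper, provided you do not replace $(x-\xi+s)^{n+1}$ by $x^{n+1}$, but instead bound the inner length $x-\xi$ by $1$ and keep the power inside the $s$-integral, so that $\int_0^\xi(x-\xi+s)^{n+1}\,ds\le x^{n+2}/(n+2)$ and the fallback is never needed.
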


\begin{proof}
Integrating \eqref{eq:genq} along the characteristic $x-\xi=\mathrm{const}$ from the boundary $\xi=0$, where $k_\varepsilon=0$, gives the equivalent integral equation
\begin{equation}
 k_\varepsilon(x,\xi)
 =-\int_0^\xi f_\varepsilon(x-\xi+s,s)\,ds
 +\int_0^\xi\!\int_s^{x-\xi+s}k_\varepsilon(x-\xi+s,\sigma)f_\varepsilon(\sigma,s)\,d\sigma\,ds .
 \label{eq:qint}
\end{equation}
Define $k^{(0)}:=0$ and let $k^{(n+1)}$ be the right side of \eqref{eq:qint} evaluated at $k^{(n)}$, with $\Delta^{(n)}:=k^{(n+1)}-k^{(n)}$. Then $|\Delta^{(0)}(x,\xi)|\le\bar f\xi\le\bar fx$. Assume $|\Delta^{(n-1)}(x',\sigma)|\le\bar f^{\,n}(x')^{n}/n!$ for all $(x',\sigma)\in T$. Since the inner integration variable $\sigma$ in \eqref{eq:qint} runs over an interval of length $x-\xi\le1$ and the first argument of $k_\varepsilon$ there equals $x-\xi+s$,
\begin{equation}
 |\Delta^{(n)}(x,\xi)|
 \le\bar f\int_0^\xi\frac{\bar f^{\,n}(x-\xi+s)^{n}}{n!}\,ds
 \le\frac{\bar f^{\,n+1}}{n!}\cdot\frac{x^{\,n+1}}{n+1}
 =\frac{\bar f^{\,n+1}x^{\,n+1}}{(n+1)!}\,.
 \label{eq:qinduction}
\end{equation}
The series $\sum_n\Delta^{(n)}$ therefore converges uniformly on $T$ to a continuous solution of \eqref{eq:qint} bounded by $e^{\bar fx}-1\le\bar k$, and the same estimate applied to the difference of two solutions gives uniqueness. Differentiability along $(1,1)$ follows from \eqref{eq:qint}, whose integrand is continuous.
\end{proof}

\begin{lemma}[Inverse of backstepping transformation]\label{lem:invert}
$\TT=I-Q$, with $Q$ the Volterra integral operator of kernel $k_\varepsilon$, is boundedly invertible on $L^2(0,1)$, $\TT^{-1}=I+P$, where $P$ is a Volterra integral operator with continuous kernel $p$ satisfying $|p|\le\bar k\,e^{\bar k}$; consequently $\|\TT\|\le1+\bar k$ and $\|\TT^{-1}\|\le M_T$, uniformly in $\varepsilon$.
\end{lemma}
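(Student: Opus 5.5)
The plan is to construct $\TT^{-1}$ explicitly as $I+P$ by solving the resolvent kernel equation of the Volterra operator $Q$, exactly as in the kernel lemma. From $(I+P)(I-Q)=I$, the kernel $p$ must satisfy
\begin{equation*}
 p(x,\xi)=k_\varepsilon(x,\xi)+\int_\xi^x p(x,s)\,k_\varepsilon(s,\xi)\,ds ,
 \qquad 0\le\xi\le x\le1 .
\end{equation*}
This is a linear Volterra equation of the second kind in the variable $s\in[\xi,x]$, with continuous data. By Lemma~\ref{lem:kernel}, $|k_\varepsilon|\le\bar k$ on $T$.

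I would solve it by successive approximations: $p^{(0)}:=k_\varepsilon$, and each $p^{(n+1)}$ is obtained by inserting $p^{(n)}$ into the integral term. The increments then obey by induction $|p^{(n+1)}-p^{(n)}|(x,\xi)\le\bar k\,\bar k^{\,n+1}(x-\xi)^{n+1}/(n+1)!$. Summing the series gives uniform convergence on $T$ to a continuous $p$ with
\[
|p|\le\bar k\sum_n\bar k^{\,n}/n!=\bar k\,e^{\bar k}.
\]
The same Gronwall-type estimate applied to the difference of two solutions gives uniqueness. Equivalently, $p=\sum_{n\ge1}$ of the iterated kernels of $k_\varepsilon$, i.e.\ the Neumann series $\sum_{n\ge1}Q^n$, whose $n$-th term has kernel bounded by $\bar k^{\,n}(x-\xi)^{n-1}/(n-1)!$.

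With $p$ in hand, Fubini on the triangle verifies both identities $(I+P)(I-Q)=I$ and $(I-Q)(I+P)=I$. The second holds because the Neumann series commutes with $Q$; alternatively, the first equation shows $I-Q$ is injective with a left inverse, and the resolvent equation written in the other order, which has the same solution, gives the right inverse.

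For the norm bounds, an integral operator with kernel bounded by $K$ on the triangle of the unit square has $L^2$ operator norm at most $K$, by Cauchy--Schwarz or the Hilbert--Schmidt bound $\big(\int_T K^2\big)^{1/2}\le K$. Hence $\|\TT\|\le1+\bar k$ and $\|\TT^{-1}\|\le1+\bar k e^{\bar k}=M_T$. Since $\bar k=e^{\bar f}-1$ depends only on $\bar f$, and $\bar f$ is $\varepsilon$-uniform by \eqref{eq:Kgen}, all bounds are uniform in $\varepsilon$. The only mildly delicate point is the two-sided inverse and keeping the factorial bookkeeping consistent with the stated constant $\bar k e^{\bar k}$.
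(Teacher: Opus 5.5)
Your proposal is correct and follows essentially the paper's route: solving the resolvent equation by successive approximation from $p^{(0)}=k_\varepsilon$ produces exactly the partial sums of the Neumann series $\sum_{n\ge1}Q^n$, with the iterated-kernel bound $\bar k^{\,n}(x-\xi)^{n-1}/(n-1)!$ summing to $\bar k\,e^{\bar k(x-\xi)}\le\bar k\,e^{\bar k}$, and you then use the same Cauchy--Schwarz bound for Volterra operators with bounded kernel. You also verify that the inverse is two-sided, since the operator-norm-convergent series commutes with $Q$; the paper leaves this point implicit.
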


\begin{proof}
The iterated kernels of $Q$ satisfy $|k_n(x,\xi)|\le\bar k^{\,n}(x-\xi)^{n-1}/(n-1)!$ by induction on the composition integral, so the Neumann series $\sum_{n\ge1}Q^n$ has kernel $p=\sum_nk_n$ converging uniformly with $|p|\le\bar k\,e^{\bar k}$. A Volterra operator with kernel bounded by $\bar p$ has $L^2(0,1)$ norm at most $\bar p$ by the Cauchy--Schwarz estimate used for $\PP_\varepsilon$ in Lemma~\ref{lem:evolution}, which gives the bounds.
\end{proof}

\begin{lemma}[Closed-loop well-posedness]\label{lem:wellposed}
Under Assumption~\ref{as:coeff}, for every $\varepsilon\in[0,\bar\varepsilon]$, $\delta>0$, and all initial data $(u_0,v_0)\in L^2(0,1)\times L^2((0,1)\times(0,D))$, the closed loop \eqref{eq:gen-u}--\eqref{eq:gen-zbc}, \eqref{eq:genU} has a unique mild solution, continuous in time into $L^2(0,1)\times L^2((0,1)\times(0,D))$.
\end{lemma}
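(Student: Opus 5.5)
The plan is to write the closed loop as a bounded perturbation of a generator, so that only the boundary feedback needs care. The uncontrolled system splits into two pieces. The first is the pure transport semigroup for $u$ with boundary datum $u(1,t)=U(t)$. The second is the fast operator $\delta^{-1}\big(-\Lambda\partial_x-\nu\partial_z+\varepsilon\partial_z^2\big)$ on the rectangle, with the inflow condition $v(0,z)=0$ and the conditions \eqref{eq:gen-zbc}. The remaining couplings $\int_0^Dc\,v\,dz$, $\delta^{-1}(d v+\int\theta v+bu)$ are bounded operators on $X:=L^2(0,1)\times L^2((0,1)\times(0,D))$, since $c,d,\theta,b$ are continuous on compact sets. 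The feedback \eqref{eq:genU} is a bounded functional of $u$. I will not treat it as an unbounded boundary perturbation. Instead I will remove it with the transformation $\TT$ of \eqref{eq:genT}: set $\tilde X:=(w,v)$ with $w=\TT u$. By Lemma~\ref{lem:invert}, $(u,v)\mapsto(w,v)$ is an isomorphism of $X$, uniformly in $\varepsilon$. In the new variables the boundary condition is $w(1,t)=0$.

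The first step is to derive the $(w,v)$ equations for smooth data. Using \eqref{eq:Tidentity} with $g=C(\cdot)v-\int_0^xf_\varepsilon u$, or more simply by computing directly with the kernel equation \eqref{eq:genq}, we get
\[
w_t=w_x+\TT\Big[C(\cdot)v-\int_0^x f_\varepsilon(\cdot,\xi)(\TT^{-1}w)(\xi)\,d\xi\Big]+\int_0^x f_\varepsilon(\cdot,\xi)(\TT^{-1}w)(\xi)\,d\xi ,\qquad w(1,t)=0 .
\]
This is exactly the computation already justified for the target system. Equivalently, and more robustly, $w_t=w_x+\TT[C(\cdot)v]-\TT[C(\cdot)\PP_\varepsilon\TT^{-1}w]+C(\cdot)\PP_\varepsilon\TT^{-1}w$ in the $f_\varepsilon$ form, and every term after $w_x$ is bounded on $X$. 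The fast equation becomes $\delta v_t=-\Lambda v_x-\nu v_z+\varepsilon v_{zz}+dv+\Theta v+b\,\TT^{-1}w$, again with a bounded coupling.

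The second step is to show that the principal operator $\mathcal A_0:=\mathrm{diag}(\partial_x,\ \delta^{-1}L_\varepsilon)$, with $w(1)=0$ and the fast boundary conditions, generates a $C_0$ semigroup on $X$. For $\partial_x$ with $w(1)=0$ this is the left-shift semigroup, a contraction. For $L_\varepsilon=-\Lambda\partial_x-\nu\partial_z+\varepsilon\partial_z^2$, dissipativity follows from the integration by parts \eqref{eq:xIBP}, \eqref{eq:zIBP}, \eqref{eq:epsfastdiss}: in the weighted inner product with weight $e^{-\sigma x}$, and $\sigma\ge\sigma_0$ large enough to dominate $\Lambda_1+\nu_1$, all boundary terms have a sign. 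The operator is therefore quasi-dissipative.

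The main obstacle is maximality, that is, range of $\mu-L_\varepsilon$ for large $\mu$. I would reduce it to the spatial evolution in $x$ used in Lemma~\ref{lem:evolution}. The resolvent equation $\mu\phi+\Lambda\phi_x+\nu\phi_z-\varepsilon\phi_{zz}=g$, $\phi(0)=0$, is an $x$-evolution $\phi_x=\Lambda^{-1}(-\mu+\varepsilon\partial_z^2-\nu\partial_z)\phi+\Lambda^{-1}g$. For $\varepsilon>0$ it is solved by the same Lions nonautonomous-form construction. For $\varepsilon=0$ it is solved explicitly along the characteristics \eqref{eq:charODE} of the two-dimensional field $(\Lambda,\nu)$. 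In both cases the Duhamel formula gives $\phi\in L^2$ with $\|\phi\|\le C\|g\|/(\mu-\omega)$. Lumer--Phillips then applies in the equivalent weighted norm.

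Finally, bounded perturbation (Pazy, Thm.~3.1.1) gives generation for the full $(w,v)$ operator. The closed-loop problem therefore has a unique mild solution in $C([0,\infty);X)$. Transporting back by $\TT^{-1}$ gives the claim in $(u,v)$ and confirms that $U(t)=\int_0^1 k_\varepsilon(1,\xi)u(\xi,t)\,d\xi$ holds, since $w(1,t)=0$. I would also note that smooth compatible data lie in the domain, which legitimizes the Lyapunov computations used earlier.
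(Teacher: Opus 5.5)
Your route is different from the paper's, and it is sound.

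\textbf{How the two proofs differ.} The paper never uses $\TT$ for well-posedness. It integrates the $u$-equation along its unit-speed characteristics. It treats $v$ as the affine image of $u$ under the fast solution map: the $x$-characteristic flow of speed $\Lambda/\delta$, composed with the $z$-dynamics, plus a Duhamel iteration. It then observes that the feedback enters only through the delayed value $U(t-(1-x))$, contributing at most $\sqrt t\,\bar k\sup_{[0,t]}\|u\|$. Hence the characteristic formula is a contraction on $C([0,T];L^2(0,1))$ for a data-dependent $T$, and concatenation gives the global solution.

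You instead homogenize the nonlocal boundary condition with $\TT$. This reduces the closed loop to a left shift, a quasi-dissipative fast operator, and bounded couplings, and you finish with Lumer--Phillips and bounded perturbation.

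\textbf{What each buys.} The paper's fixed point needs no generator domain for the slow part and uses the delay structure of the feedback directly. However, it takes the fast solution map largely on faith. Your version puts the work where it actually lies: maximality of the fast operator, via the $x$-marching of Lemma~\ref{lem:evolution}. Dissipativity must then be checked on the space where those resolvent solutions live. At $\varepsilon=0$ this is the maximal transport domain with inflow traces; for $\varepsilon>0$ it is the variational space of Lions' theorem, not the $H^2$ domain on which the strong integrations by parts were done. Your approach also yields a $C_0$-semigroup, hence classical solutions for data in the domain. That is exactly what legitimizes performing the Lyapunov computations on smooth solutions and extending by density.

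\textbf{Corrections needed.}
\begin{enumerate}
\item Your $w$-equation has a spurious term. Write $Fu:=\int_0^xf_\varepsilon(\cdot,\xi)u(\xi)\,d\xi=C(\cdot)\PP_\varepsilon u$. The identity \eqref{eq:Tidentity} with $g=C(\cdot)v-Fu$ gives $w_t=w_x+\TT[C(\cdot)v]-\TT F\TT^{-1}w$, with no trailing $+F\TT^{-1}w$. Boundedness is unaffected. But with that extra term, the pulled-back generator would be $u_x+C(\cdot)v+\TT^{-1}Fu$ rather than the closed loop.
\item Identifying the pulled-back semigroup with the closed loop requires $\TT$ to map $\{u\in H^1:\ u(1)=\int_0^1k_\varepsilon(1,\xi)u(\xi)\,d\xi\}$ onto $\{w\in H^1:\ w(1)=0\}$. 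Lemma~\ref{lem:kernel} gives $k_\varepsilon$ differentiable only along $(1,1)$, so you need the formula $(\TT u)'=u'-\int_0^x(\partial_x+\partial_\xi)k_\varepsilon(x,\xi)u(\xi)\,d\xi-\int_0^xk_\varepsilon(x,\xi)u'(\xi)\,d\xi$. It holds by the substitution $\xi=x-r$ together with $k_\varepsilon(x,0)=0$. The resulting triangular Volterra structure on the pair $(u,u')$ then shows that $\TT^{-1}$ also preserves $H^1$.
\end{enumerate}
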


\begin{proof}
Integrating the $u$-equation along its unit-speed characteristics,
\begin{eqnarray}
 u(x,t)&=&u_0(x+t)\,\mathbf 1\{x+t\le1\}
 +U\big(t-(1-x)\big)\,\mathbf 1\{x+t>1\}
 \nonumber\\
 &&+\int_{\max\{0,\,t-(1-x)\}}^{t}\big(C(\cdot)v\big)(x+t-s,s)\,ds,
 \label{eq:uchar}
\end{eqnarray}
with $U(t)=\int_0^1k_\varepsilon(1,\xi)u(\xi,t)\,d\xi$. On $C([0,T];L^2(0,1))$, consider the map taking $u$ to the right side of \eqref{eq:uchar}, in which $v$ is the mild solution of \eqref{eq:gen-v} driven by $b\,u$, obtained by composing the $x$-characteristic flow of speed $\Lambda/\delta$ with the dynamics in $z$ (characteristics for $\varepsilon=0$, the form solutions above for $\varepsilon>0$) and a Duhamel iteration for the bounded remaining terms; this solution map is affine in $u$, with Lipschitz constant $O(T)$ into $C([0,T];L^2)$. The feedback term of \eqref{eq:uchar} is delayed, $U$ being evaluated at times $\le t-(1-x)$, and contributes, by Cauchy--Schwarz in $x$, at most $(\int_0^t|U|^2)^{1/2}\le\sqrt t\,\bar k\sup_{[0,t]}\|u\|$. The map is therefore a contraction for $T$ small, with $T$ depending only on $\bar k$, $M_C$, $\delta$, and the plant bounds; its fixed point, with the associated $v$, is the unique mild solution on $[0,T]$, and concatenation extends it to all $t\ge0$.
\end{proof}

\medskip\paragraph{Acknowledgment.}
The author's problems, ideas, and results were developed with the assistance of Claude and ChatGPT in final theorem formulation, proofs, and drafting throughout the paper, under the author's correction and complete verification.

This work was funded by AFOSR grant FA9550-23-1-0535 and NSF grant ECCS-2151525.

\end{document}